\newtheorem{Proposition}{Proposition}
\numberwithin{equation}{section}
\def \tyb#1{\hbox{\tiny{[{\it{#1}}]}}}
\def \ty#1{\hbox{\tiny{{\it{#1}}}}}
\def \S#1{S^{(#1)}}
\DeclareMathAccent{\wtilde}{\mathord}{largesymbols}{"65}
\DeclareMathAccent{\what}{\mathord}{largesymbols}{"62}
\def\m@th{\mathsurround=0pt}
\mathchardef\bracell="0365
\def\upbrall{$\m@th\bracell$}
\def\undertilde#1{\mathop{\vtop{\ialign{##\crcr
    $\hfil\displaystyle{#1}\hfil$\crcr
     \noalign
     {\kern1.5pt\nointerlineskip}
     \upbrall\crcr\noalign{\kern1pt
   }}}}\limits}
\def\underhat#1{\mathop{\vtop{\ialign{##\crcr
    $\hfil\displaystyle{#1}\hfil$\crcr
     \noalign
     {\kern1.5pt\nointerlineskip}
     \upbrall\crcr\noalign{\kern1pt
   }}}}\limits}
\newcommand{\sg}{\sigma}
\newcommand{\nn}{\nonumber}
\newcommand{\bA}{\boldsymbol{A}}
\newcommand{\bB}{\boldsymbol{B}}
\newcommand{\bC}{\boldsymbol{C}}
\newcommand{\bI}{\boldsymbol{I}}
\newcommand{\bK}{\boldsymbol{K}}
\newcommand{\bM}{\boldsymbol{M}}
\newcommand{\bS}{\boldsymbol{S}}
\newcommand{\br}{\boldsymbol{r}}
\newcommand{\bs}{\boldsymbol{s}}
\newcommand{\bu}{\boldsymbol{u}}
\newcommand{\bT}{\boldsymbol{T}}
\newcommand{\bF}{\boldsymbol{F}}
\newcommand{\bG}{\boldsymbol{G}}
\newcommand{\bH}{\boldsymbol{H}}
\newcommand{\bX}{\boldsymbol{X}}
\newcommand{\bY}{\boldsymbol{Y}}
\newcommand{\bZ}{\boldsymbol{Z}}
\newcommand{\Ga}{\boldsymbol{\Gamma}}
\newcommand{\st}{\hbox{\tiny\it{T}}}
\begin{document}

\title{{The Sylvester equation and integrable equations: I. \\
The Korteweg-de Vries system and sine-Gordon equation}}

\author{Dan-dan Xu$^1$,~ Da-jun Zhang$^1$\footnote{Corresponding author. Email: djzhang@staff.shu.edu.cn},~
Song-lin Zhao$^2$\\
~~\\
{\small \it ${}^{1}$ Department of Mathematics, Shanghai University, Shanghai 200444, P.R. China}\\
{\small \it  ${}^{2}$ Department of Mathematics, Zhejiang University of Technology, Hangzhou 310023, P.R. China}
}

\maketitle

\begin{abstract}

The paper is to reveal the direct links between the well known Sylvester equation in matrix theory
and some integrable systems.
Using the Sylvester equation $\boldsymbol{K} \boldsymbol{M}+\boldsymbol{M} \boldsymbol{K}=\boldsymbol{r}\, \boldsymbol{s}^{T}$ 
we introduce a scalar function
$S^{(i,j)}=\boldsymbol{s}^{T}\, \boldsymbol{K}^j(\boldsymbol{I}+\boldsymbol{M})^{-1}\boldsymbol{K}^i\boldsymbol{r}$
which is defined as same as in discrete case.
$S^{(i,j)}$ satisfy some recurrence relations which can be viewed as discrete equations
and play indispensable roles in deriving continuous integrable equations.
By imposing dispersion relations on  $\boldsymbol{r}$ and $\boldsymbol{s}$,
we find the Korteweg-de Vries  equation, modified Korteweg-de Vries  equation,
Schwarzian Korteweg-de Vries equation and sine-Gordon equation
can be expressed by some discrete equations of $S^{(i,j)}$ defined on certain points.
Some special matrices are used to solve the Sylvester equation
and prove symmetry property $S^{(i,j)}=S^{(i,j)}$.
The solution $\boldsymbol{M}$ provides $\tau$ function by $\tau=|\boldsymbol{I}+\boldsymbol{M}|$.
We hope our results can not only unify the Cauchy matrix approach in both continuous and discrete cases,
but also bring more links for integrable systems and variety of areas where the
Sylvester equation appears frequently.

\vskip 8pt \noindent {\bf Keywords:} The Sylvester equation, integrable systems, Cauchy matrix approach, solutions
\noindent {\bf PACS:}\quad  02.30.Ik, 05.45.Yv, 02.10.Yn


\end{abstract}

\maketitle

%

\section{Introduction}
\label{sec-1}

The Sylvester equation
\begin{equation}
\bA\bM-\bM\bB=\bC
\label{SE}
\end{equation}
is one of the most well-known matrix equations.
It appears frequently in many areas of applied mathematics and plays  a central role in particular in systems and control theory,
signal processing, filtering, model reduction, image restoration, and so on.
J. Sylvester is the first mathematician who introduced the term ``Matrix'' to name a matrix of the present form.
In the equation \eqref{SE} $\bA, \bB$ and $\bC$ are known matrices and $\bM$ is the unknown matrix.
It is also known as the Rosenblum  equation in operator theory.
We refer the reader to the elegant survey \cite{BR-BLMS-1997} by  Bhatia and Rosenthal and the references therein
for a history of the equation and many interesting and important theoretical results.

In the paper we will investigate the role of the Sylvester equation \eqref{SE} in the field of integrable systems.
Integrable systems mean the exactly solvable nonlinear partial differential (and difference) equations with regular solution structures
(e.g. $N$-soliton solutions, etc.).
From the glance it is hard to relate the Sylvester equation \eqref{SE} and an integrable system together.

In fact, the Sylvester equation \eqref{SE} appears in many contexts of integrable systems,
for example, the Cauchy matrix approach,
the operator method
and
the bidifferential calculus approach.
The Cauchy matrix approach, as a systematic method for constructing discrete integrable equations together with their solutions,
was first proposed by Nijhoff and his collaborators \cite{NAH-2009-JPA,N-2004-math}.
In this method, the discrete plain wave factors $r_i=\bigl(\frac{a-k_i}{a+k_i}\bigr)^n\bigl(\frac{b-k_i}{b+k_i}\bigr)^m r^{(0)}_i$
satisfy a Sylvester equation
\begin{equation}
\bK\bM+ \bM \bK = \br\, \bs^{\st},
\label{SE-1}
\end{equation}
where
\begin{align*}
& \bK=\mathrm{Diag}(k_1,k_2,\cdots,k_N),~~\bM=(M_{i,j})_{N\times N},~M_{i,j}=\frac{r_i s_j}{k_i+k_j},~~\br=(r_1,r_2,\cdots,r_N)^T,
\end{align*}
and $\bs=(s_1,s_2,\cdots,s_N)^T$ and 
$(r^{(0)}_1,r^{(0)}_2,\cdots,r^{(0)}_N)^T$ are  constant vectors.
Then,  scalar functions
$
\S{i,j}=\bs^{\st}\,\bK^j(\bI+\bM)^{-1}\bK^i\br
$
obey some recurrence relations,
and among them there are closed forms which give rise to discrete integrable equations.
One can view \eqref{SE-1} as a Sylvester equation containing an unknown $\bM$ and a generic constant matrix $\bK$.
Then, more general solutions can be derived for discrete systems \cite{ZZ-SAM-2013}.
The operator method (or trace method in scalar case), based on Marchenko's work \cite{M-book-1987},
was first proposed by Aden and Carl \cite{AC-1996-JMP},
and developed by Schiebold and her collaborators \cite{S-PD-1998,CS-Non-1999,CS-DMV-2000}.
In this method, suitable dispersion relations are imposed on $\bM$ or $\Ga=\bI+\bM$
(e.g., $\bM_x=\bK\bM,~\bM_t=\bK^3 \bM$)
and solutions of nonlinear partial equations are expressed in the form of logarithmic derivative $\Ga^{-1}\Ga_x$
or its trace.
This method needs lengthy verification of solutions. For the review of this method one can see \cite{CS-Non-1999,CS-DMV-2000}.
This method relies on the Sylvester equation \eqref{SE} with $\bC$ of rank one so as to get needed trace property.
It is remarkable that in \cite{S-LAA-2010} Schiebold collected examples of the correspondence
of the Sylvester equation \eqref{SE} and solutions of some integrable systems.
She also derived explicit solutions of \eqref{SE} for the case $\bA,\bB$ having Jordan block canonical forms
and extended formulae of Cauchy-type determinants.
The operator method for scalar case is also  viewed as a trace method in \cite{B-Non-2000}
where we note that operator solutions to the Marchenko's integral equation were given.
In \cite{AM-IP-2006,ADM-IP-2007,ADM-JMP-2010,D-TMP-2011}
solutions of the Gel'fand-Levitan-Marchenko (GLM) equation are expressed  via a triplet $(\bA,\bB,\mathbf{C})$
where matrix $\bA$ and vectors $\bB$ and $\mathbf{C}$ satisfy some Sylvester equations.
The bidifferential calculus approach (see the review paper \cite{DM-DCDS-2009} and the references therein).
In this approach, integrable equations are derived by introducing  bidifferential operators $\mathrm{d}$ and $\bar{\mathrm{d}}$ into graded algebras.
Solutions of the obtained integrable equations can be  parameterised in terms of some matrices which satisfy Sylvester equations,
(see \cite{DM-IP-2010,DM-SIGMA-2010} as more examples).
One more example is given in \cite{HL-PLA-2001},
where solutions to the Kadomtsev-Petviashvili (KP) equation are given in the form $w=p^T \mathbf{C}^{-1}q$ and the Sylvester equation also appears
in the solving procedure.

In this paper we would like to reveal more primary links between the  Sylvester equation and integrable equations.
We will see that the  Sylvester equation plays a basic role in the sense of constructing integrable equations and their solutions.
This is already realized in \cite{S-LAA-2010} for continuous case and
in Refs.\cite{NZZ-2013,Z-INI-2013} for discrete integrable equations.
In this paper, motivated by the discrete Cauchy matrix approach  and the understanding discrete dispersion relation,
we will impose the dispersion relation on $\br$ or $\bs$ (or both)
rather than on $\bM$ in \eqref{SE-1},
and focus on the evolution of the scalar function $\S{i,j}$.
We can then not only unify the Cauchy matrix approach in discrete and continuous cases,
but also have chances to find more links between discrete systems and continuous ones.
In this paper, we will start from a Sylvester equation (see \eqref{SE-2}) and
examine the links between the Sylvester equation and some continuous integrable equations,
such as the Korteweg-de Vries (KdV) equation, modified Korteweg-de Vries (mKdV) equation,
Schwarzian Korteweg-de Vries (SKdV) equation and sine-Gordon equation.
We will see that $\S{i,j}$ (defined as \eqref{Sij}) compose an infinite symmetric matrix
and $\S{i,j}$ obey some recurrence relations as they do in discrete case.
These recurrence relations can be viewed as discrete equations of $\S{i,j}$ with
discrete independent variables $i,j$.
It turns out that the continuous equations considered in our paper (usually are their potential forms)
are equivalent to some discrete equations of $\S{i,j}$ sitting on some special points.

The paper is organized as follows.
In Sec.\ref{sec-2} we introduce the Sylvester equation and some related properties including solvability,
recurrence relations and symmetric property of $\S{i,j}$.
In Sec.\ref{sec-3} we impose evolution on the elements $\br, \bs$ in the Sylvester equation \eqref{SE-2}
and derive the KdV equation, mKdV equation and SKdV equation.
Sec.\ref{sec-4} derives the sine-Gordon equation.
In Sec.\ref{sec-5} we discuss the relations between $\S{i.j}$
and $\tau$ functions, the trace in the operator method, etc.
Sec.\ref{sec-6} contains conclusions and further discussions.
In addition, we have an Appendix consisting of 5 sections
as a compensation of the paper.

\section{The Sylvester equation and some related properties}\label{sec-2}

\subsection{Solvability}

For the solution of the Sylvester equation \eqref{SE}, a well known result,
which was proved by Sylvester\cite{Sylvester-1884}, is
\begin{Proposition}\label{prop-1}
Let us denote the eigenvalue sets of $\bA$ and $\bB$ by $\mathcal{E}(\bA)$ and $\mathcal{E}(\bB)$, respectively.
For the known $\bA, \bB$ and $\bC$, the Sylvester equation \eqref{SE} has a unique solution $\bM$ if and only if
$\mathcal{E}(\bA)\bigcap \mathcal{E}(\bB)=\varnothing$.
\end{Proposition}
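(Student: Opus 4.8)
The plan is to recast \eqref{SE} as a statement about a single linear operator on a finite-dimensional space. Viewing $\bA$ as $m\times m$ and $\bB$ as $n\times n$, define $\mathcal{L}\colon \bM\mapsto \bA\bM-\bM\bB$ on the space $\mathbb{C}^{m\times n}$ of rectangular matrices. Then \eqref{SE} has a unique solution (for the given $\bC$, equivalently for every right-hand side) exactly when $\mathcal{L}$ is bijective, and since the domain and codomain have the same finite dimension this is equivalent to $\ker\mathcal{L}=\{0\}$. So everything reduces to deciding when $\mathcal{L}$ is singular, and I would argue that $0$ is an eigenvalue of $\mathcal{L}$ if and only if $\mathcal{E}(\bA)\cap\mathcal{E}(\bB)\neq\varnothing$ (in fact one expects the full relation $\mathcal{E}(\mathcal{L})=\{\lambda-\mu:\lambda\in\mathcal{E}(\bA),\ \mu\in\mathcal{E}(\bB)\}$, though only the one implication we need is used).

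First, necessity of $\mathcal{E}(\bA)\cap\mathcal{E}(\bB)=\varnothing$ is the easy half. Suppose some $\lambda$ lies in both spectra. Pick a right eigenvector $\bu\neq 0$ of $\bA$ with $\bA\bu=\lambda\bu$, and, using that $\bB$ and $\bB^{\st}$ share eigenvalues, a left eigenvector $\bv^{\st}\neq 0$ of $\bB$ with $\bv^{\st}\bB=\lambda\bv^{\st}$. Then the nonzero rank-one matrix $\bM=\bu\,\bv^{\st}$ satisfies $\mathcal{L}(\bM)=\lambda\bu\bv^{\st}-\lambda\bu\bv^{\st}=0$, so $\ker\mathcal{L}\neq\{0\}$ and \eqref{SE} cannot have a unique solution.

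The substantive half is the converse: assuming $\mathcal{E}(\bA)\cap\mathcal{E}(\bB)=\varnothing$, show $\mathcal{L}$ is invertible. Here I would use the two commuting operators $\mathcal{L}_A\colon\bM\mapsto\bA\bM$ and $\mathcal{L}_B\colon\bM\mapsto\bM\bB$, so that $\mathcal{L}=\mathcal{L}_A-\mathcal{L}_B$ and, for any polynomial $q$, $q(\mathcal{L}_A)$ is the map $\bM\mapsto q(\bA)\bM$ while $q(\mathcal{L}_B)$ is $\bM\mapsto\bM\,q(\bB)$. Take $p$ to be the characteristic polynomial of $\bB$: Cayley--Hamilton gives $p(\mathcal{L}_B)=0$, while disjointness of the spectra means $p$ vanishes at no eigenvalue of $\bA$, so $p(\bA)$, hence $p(\mathcal{L}_A)$, is invertible. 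Now factor $p(x)-p(y)=(x-y)\,g(x,y)$ with $g\in\mathbb{C}[x,y]$; since $\mathcal{L}_A$ and $\mathcal{L}_B$ commute we may substitute them to obtain $p(\mathcal{L}_A)-p(\mathcal{L}_B)=(\mathcal{L}_A-\mathcal{L}_B)\,g(\mathcal{L}_A,\mathcal{L}_B)=\mathcal{L}\circ g(\mathcal{L}_A,\mathcal{L}_B)$. As $p(\mathcal{L}_B)=0$, the left side is the invertible operator $p(\mathcal{L}_A)$, so $\mathcal{L}$ is surjective, hence bijective on $\mathbb{C}^{m\times n}$. (An alternative is to exhibit the solution explicitly by the resolvent integral $\bM=\frac{1}{2\pi i}\oint_{\Gamma}(x\bI-\bA)^{-1}\bC\,(x\bI-\bB)^{-1}\,dx$, with $\Gamma$ a contour enclosing $\mathcal{E}(\bA)$ and disjoint from $\mathcal{E}(\bB)$, and verify directly that it solves \eqref{SE}; uniqueness then comes from the kernel computation above.)

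The main obstacle is precisely this converse: producing, without resorting to triangular normal forms of $\bA$ and $\bB$, an inverse of $\mathcal{L}$ from the spectral hypothesis, which the polynomial-substitution identity handles cleanly. One should only note that the argument lives over $\mathbb{C}$ (or the algebraic closure of the ground field), so that the eigenvectors used in the easy half genuinely exist.
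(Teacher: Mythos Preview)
Your argument is correct and complete: the reduction to injectivity of $\mathcal{L}$ on a finite-dimensional space, the rank-one construction for necessity, and the Cayley--Hamilton/polynomial-factorization trick for sufficiency are all sound. Note, however, that the paper does not supply its own proof of this proposition---it is stated as a well-known classical result attributed to Sylvester~\cite{Sylvester-1884} (with a pointer to the survey~\cite{BR-BLMS-1997}), so there is nothing in the paper to compare against; your write-up simply fills in what the paper leaves to the literature.
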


When the eigenvalues of $\bA$ and $\bB$ satisfy certain conditions, the solution of the Sylvester equation \eqref{SE}
can be expressed via series or integration. (See Ref.\cite{BR-BLMS-1997} and the references therein.)
\begin{Proposition}\label{prop-1-1}
When $\mathcal{E}(A)\subset \{z: |z|>\rho\}$ and $\mathcal{E}(B)\subset \{z:|z|<\rho\}$ for some $\rho>0$,
then the solution of the Sylvester equation \eqref{SE} is
\[\bM =\sum^{\infty}_{j=0}\bA^{-j-1}\bC \bB^{j}.\]
When $\mathcal{E}(\bA)$ and $\mathcal{E}(\bB)$ are contained in the open right half plane and the open left half plane, respectively,
then the solution of the Sylvester equation \eqref{SE} is
\[\bM =\int^{\infty}_{0}e^{-t\bA}\bC e^{t \bB} dt.\]
\end{Proposition}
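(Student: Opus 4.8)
The plan is to handle the two formulas in parallel. In either case the hypotheses already imply $\mathcal{E}(\bA)\cap\mathcal{E}(\bB)=\varnothing$ — in the first case the two spectra lie respectively outside and inside the circle $|z|=\rho$, and in the second they lie in disjoint open half-planes — so Proposition~\ref{prop-1} guarantees that \eqref{SE} has a unique solution. It therefore suffices to check that the displayed expression is a well-defined matrix and that it satisfies $\bA\bM-\bM\bB=\bC$.

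For the series, note first that $0\notin\mathcal{E}(\bA)$, so $\bA^{-1}$ exists and its eigenvalues $1/\lambda$ (for $\lambda\in\mathcal{E}(\bA)$) all have modulus $<1/\rho$. By Gelfand's spectral-radius formula there are a constant $C_1$ and a number $\alpha<1/\rho$ with $\|\bA^{-n}\|\le C_1\alpha^{n}$ for all $n\ge0$, and likewise $\|\bB^{n}\|\le C_2\beta^{n}$ with $\beta<\rho$; hence $\|\bA^{-j-1}\bC\bB^{j}\|\le C_1C_2\|\bC\|\,\alpha\,(\alpha\beta)^{j}$ with $\alpha\beta<1$, so the series converges absolutely. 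Absolute convergence then licenses the rearrangement in the telescoping identity
\[
\bA\bM-\bM\bB=\sum_{j=0}^{\infty}\bA^{-j}\bC\bB^{j}-\sum_{j=0}^{\infty}\bA^{-j-1}\bC\bB^{j+1}
=\sum_{j=0}^{\infty}\bA^{-j}\bC\bB^{j}-\sum_{j=1}^{\infty}\bA^{-j}\bC\bB^{j}=\bC .
\]

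For the integral I would argue analogously. Since $\mathcal{E}(\bA)$ lies in the open right half-plane, for any $a$ with $0<a<\min_{\lambda\in\mathcal{E}(\bA)}\mathrm{Re}\,\lambda$ a Jordan-form (or resolvent) estimate gives $\|e^{-t\bA}\|\le C_1e^{-at}$ for $t\ge0$, the polynomial factors coming from non-diagonalisable blocks being absorbed by choosing $a$ slightly below the spectral abscissa; similarly $\|e^{t\bB}\|\le C_2e^{-bt}$ for some $b>0$. Thus the integrand is dominated by $C_1C_2\|\bC\|e^{-(a+b)t}$ and the integral converges. To verify the equation one differentiates
\[
\frac{d}{dt}\bigl(e^{-t\bA}\bC e^{t\bB}\bigr)=-\bA\,\bigl(e^{-t\bA}\bC e^{t\bB}\bigr)+\bigl(e^{-t\bA}\bC e^{t\bB}\bigr)\,\bB ,
\]
and integrates from $0$ to $\infty$; by the decay estimates the boundary term vanishes at $\infty$ and equals $\bC$ at $0$, so $-\bC=-\bA\bM+\bM\bB$, i.e. $\bA\bM-\bM\bB=\bC$.

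The only part that is not a one-line computation is the pair of growth estimates underpinning convergence, namely passing from the location of $\mathcal{E}(\bA)$ to exponential decay of $\bA^{-n}$ (respectively of $e^{-t\bA}$). These are standard consequences of Gelfand's formula / the Jordan canonical form, and once they are recorded the two verifications reduce to a telescoping sum and the fundamental theorem of calculus, respectively.
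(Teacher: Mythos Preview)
Your argument is correct: the spectral-radius/Gelfand estimate gives absolute convergence of the series and the telescoping computation then verifies \eqref{SE}; the Jordan-form decay bounds give convergence of the integral and the fundamental theorem of calculus yields $\bA\bM-\bM\bB=\bC$. There is nothing to fix.

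Note, however, that the paper does not supply its own proof of this proposition. It simply records the two formulas and refers the reader to the survey \cite{BR-BLMS-1997} (Bhatia--Rosenthal) and the references therein. Your write-up is precisely the standard verification one finds in that literature, so there is no meaningful comparison to make: you have filled in what the paper leaves as a citation.
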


The Sylvester equation of our interest in this paper is
\begin{equation}
\bK \bM+\bM\bK=\br\, \bs^{\st},
\label{SE-2}
\end{equation}
where $\bK, \bM \in \mathbb{C}_{N\times N}$, $\br=(r_1,r_2,\cdots,r_N)^T$ and $\bs=(s_1,s_2,\cdots,s_N)^{T}$.
This corresponds to $\bA=-\bB$ and $\bC$ being of rank 1 in \eqref{SE}.
In light of proposition \ref{prop-1}, the equation \eqref{SE-2} is solvable when $0\notin \mathcal{E}(\bK)$
or equivalently, $|\bK|\neq 0$.
Under such a condition, an explicit form (neither series nor integration) of the solution of the Sylvester equation \eqref{SE-2}
were presented in  \cite{ZZ-SAM-2013,S-LAA-2010}
The solution is obtained by factorizing $\bM$ into $\bM=\bF \bG \bH$ where $\bF, \bG, \bH$ are some $N\times N$ matrices.
For the completeness of the paper, we would like to give solving procedure and results in Appendix \ref{A:2}.

\subsection{Infinite matrix $\bS$}

\subsubsection{Recurrence relation of $\S{i,j}$}

By the Sylvester equation \eqref{SE} we introduce an $\infty\times\infty$ matrix $\bS=(S^{(i,j)})_{\infty\times\infty}$ where
the element $S^{(i,j)}$ is defined as (cf. \cite{NAH-2009-JPA,N-2004-math})
\begin{equation}
S^{(i,j)}=\bs^{\st}\,\bK^j(\bI+\bM)^{-1}\bK^i\br, ~~i,j\in \mathbb{Z},
\label{Sij}
\end{equation}
where $\bI$ is the $N$th-order unit matrix.
In the paper we call $\S{i,j}$ a master function because
it is used to generate integrable equations.
Let us discuss some properties of the matrix $\bS$ and the elements $S^{(i,j)}$.

\begin{Proposition}\label{prop-2}
For the master function $S^{(i,j)}$ defined by \eqref{Sij} with $\bM, \bK, \br,\bs$ satisfying the Sylvester equation \eqref{SE-2},
we have the following relation,
\begin{equation}
\S{i,j+2k}=\S{i+2k,j}-\sum^{2k-1}_{l=0}(-1)^l\S{2k-1-l,j}\S{i,l},~~~(k=1,2,\cdots).
\label{Sij-k+}
\end{equation}
Particularly, when $k=1$ we have (see also equation (2.16) in \cite{NAH-2009-JPA})
\begin{equation}
S^{(i,j+2)}=S^{(i+2,j)}-S^{(i,0)}S^{(1,j)}+S^{(i,1)}S^{(0,j)}.
\label{Sij-k=1+}
\end{equation}
\end{Proposition}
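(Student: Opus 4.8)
The plan is to establish the recurrence \eqref{Sij-k+} by induction on $k$, with the base case $k=1$ done by a direct manipulation of the resolvent $(\bI+\bM)^{-1}$ using the Sylvester equation \eqref{SE-2}. First I would rewrite \eqref{SE-2} in the form $\bK\bM + \bM\bK = \br\,\bs^{\st}$ and observe that, because $\bK$ and $\bM$ do not commute, the natural object to track is the ``defect'' $\bK(\bI+\bM) - (\bI+\bM)(-\bK)$ or, more usefully, how $\bK^{i}\br$ interacts with $(\bI+\bM)^{-1}$ when one pushes a power of $\bK$ through from the left. Concretely, from \eqref{SE-2} one gets $\bK^2(\bI+\bM) - (\bI+\bM)\bK^2 = \bK\br\,\bs^{\st} - \br\,\bs^{\st}\bK$ (multiply \eqref{SE-2} by $\bK$ on left and right and subtract), so that
\[
(\bI+\bM)^{-1}\bK^2 = \bK^2(\bI+\bM)^{-1} - (\bI+\bM)^{-1}\bigl(\bK\br\,\bs^{\st} - \br\,\bs^{\st}\bK\bigr)(\bI+\bM)^{-1}.
\]
Sandwiching this between $\bs^{\st}\bK^{j}$ on the left and $\bK^{i}\br$ on the right, and recognising the scalar factors $\bs^{\st}(\bI+\bM)^{-1}\bK\br = S^{(1,j)}\big|_{j=0}$-type quantities, yields exactly \eqref{Sij-k=1+}. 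This is the computational heart of the argument; everything else is bookkeeping.

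For the inductive step, suppose \eqref{Sij-k+} holds for some $k\ge 1$; I would apply the $k=1$ identity \eqref{Sij-k=1+} with $j$ replaced by $j+2k$, namely $S^{(i,j+2k+2)} = S^{(i+2,j+2k)} - S^{(i,0)}S^{(1,j+2k)} + S^{(i,1)}S^{(0,j+2k)}$, and then rewrite each term with upper index shifted by $2k$ using the induction hypothesis (applied with suitable choices of the ``$i$'' slot: once with $i+2$, once with $1$, once with $0$, and once with the given $i$). After substituting, the right-hand side becomes $S^{(i+2k+2,j)}$ minus a sum; the task is to check that the accumulated sum telescopes/recombines into $\sum_{l=0}^{2k+1}(-1)^{l}S^{(2k+1-l,j)}S^{(i,l)}$. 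The sign pattern $(-1)^l$ and the fact that the two ``correction'' terms in \eqref{Sij-k=1+} have opposite signs are exactly what make the reindexed pieces line up; this is a finite, purely algebraic verification.

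The main obstacle I anticipate is the index-shifting bookkeeping in the inductive step: one must be careful that the induction hypothesis is legitimately applicable with negative or small upper indices (the statement is for $i,j\in\mathbb{Z}$, so this is fine, but it must be invoked cleanly), and that the products $S^{(a,b)}S^{(c,d)}$ coming from different sources are matched with the correct signs and ranges so that the combined sum is precisely the claimed one with no leftover terms. A clean way to avoid errors is to introduce the auxiliary ``dressed'' vectors $\bu^{(i)} = (\bI+\bM)^{-1}\bK^{i}\br$ and the covector $\bv^{(j)\,\st} = \bs^{\st}\bK^{j}(\bI+\bM)^{-1}$, so that $S^{(i,j)} = \bs^{\st}\bK^{j}\bu^{(i)} = \bv^{(j)\,\st}\bK^{i}\br$ and the Sylvester relation reads as a recurrence $\bK\bu^{(i)} + \bu^{(i+1)} = \bu^{(i-1)}\!\cdot\!(\text{scalar})$-type identity after contracting; proving the recursion at the vector level first and only then contracting against $\bs^{\st}\bK^{j}$ isolates the combinatorics and makes the telescoping transparent. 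Alternatively, once \eqref{Sij-k=1+} is in hand one may simply iterate it $k$ times and collect terms, which is the most economical route and the one I would present.
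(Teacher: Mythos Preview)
Your approach is correct but takes a different route from the paper's, and there is one slip to fix. From $\bK^2(\bI+\bM)-(\bI+\bM)\bK^2=\bK\br\,\bs^{\st}-\br\,\bs^{\st}\bK$ one obtains
\[
(\bI+\bM)^{-1}\bK^2 = \bK^2(\bI+\bM)^{-1} + (\bI+\bM)^{-1}\bigl(\bK\br\,\bs^{\st} - \br\,\bs^{\st}\bK\bigr)(\bI+\bM)^{-1},
\]
with a \emph{plus} sign, not minus; sandwiching this corrected identity gives \eqref{Sij-k=1+}. Your scalar-level induction for general $k$ does go through, but it is heavier than you indicate: after inserting the hypothesis into $S^{(i+2,j+2k)}$, $S^{(1,j+2k)}$, $S^{(0,j+2k)}$ you produce cubic terms in $S$ and a sum containing $S^{(i+2,l)}$; you must then apply the $k=1$ identity once more to each $S^{(i+2,l)}$ to convert it to $S^{(i,l+2)}$ plus corrections, whereupon the cubic pieces cancel and the reindexing $m=l+2$ (with the two leftover terms $-S^{(i,0)}S^{(2k+1,j)}+S^{(i,1)}S^{(2k,j)}$ supplying $m=0,1$) yields exactly $-\sum_{m=0}^{2k+1}(-1)^m S^{(2k+1-m,j)}S^{(i,m)}$.

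The paper instead does the induction at the matrix level: it first proves $\bK^{s}\bM-(-1)^{s}\bM\bK^{s}=\sum_{l=0}^{s-1}(-1)^{l}\bK^{s-1-l}\br\,\bs^{\st}\bK^{l}$ for all $s\ge 1$ by a one-line recursion, and then---precisely via the auxiliary vector $\bu^{(i)}=(\bI+\bM)^{-1}\bK^{i}\br$ that you mention only as an alternative---writes $(\bI+\bM)\bK^{2k}\bu^{(i)}=\bK^{2k+i}\br-\sum_{l=0}^{2k-1}(-1)^{l}\bK^{2k-1-l}\br\,\bs^{\st}\bK^{l}\bu^{(i)}$ and contracts with $\bs^{\st}\bK^{j}(\bI+\bM)^{-1}$ to obtain \eqref{Sij-k+} for every $k$ in one stroke. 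This avoids the cubic cancellations and the scalar reindexing entirely; your ``clean alternative'' with $\bu^{(i)}$ is in fact the paper's actual argument and is the more economical one to present.
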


\begin{proof}
First, from the Sylvester equation \eqref{SE-2} we have the following relation
\begin{equation}
\bK^s \bM -(-1)^{s} \bM \bK^s=\sum^{s-1}_{j=0}(-1)^j\bK^{s-1-j} \br \bs^{\st} \bK^j,~~~(s=1,2,\cdots).
\label{MK-rec}
\end{equation}
In fact, obviously, when $s=1$, \eqref{MK-rec} is nothing but the Sylvester equation \eqref{SE-2} itself.
Now left-multiplying $\bK$ on \eqref{SE-2} yields
\[\bK^2 \bM +\bK \bM \bK= \bK  \br \bs^{\st}.\]
Using the Sylvester equation \eqref{SE-2} we replace the term $\bK \bM$ with $-\bM \bK+\br \bs^{\st}$
we have
\[
\bK^2 \bM - \bM \bK^2 = - \br \bs^{\st}\bK +\bK \br \bs^{\st} = \sum^{1}_{j=0}(-1)^j\bK^{1-j} \br \bs^{\st} \bK^j,\]
which is $s=2$ in \eqref{MK-rec}.
Next, repeating the same procedure or using mathematical induction we can reach \eqref{MK-rec}.

To prove the relation \eqref{Sij-k+}, we introduce an auxiliary vector
\begin{equation}
\bu^{(i)}=(\bI+\bM)^{-1} \bK^i \br,~~ i\in \mathbb{Z}.
\label{ui}
\end{equation}
From this we have
\[\bK^s \bu^{(i)} + \bK^s \bM \bu^{(i)}=\bK^{s+i} \br.\]
Then, taking $s=2k$ and using the relation \eqref{MK-rec}, it is not difficult to find
\[ (\bI+\bM)\bK^{2k} \bu^{(i)}=\bK^{2k+i} \br
-\sum^{2k-1}_{l=0} (-1)^{l}\bK^{2k-1-l} \br \bs^{\st}  \bK^l \bu^{(i)},
\]
which is, then, left-multiplied by $\bs^{\st}\bK^j (\bI+\bM)^{-1}$ yields the relation \eqref{Sij-k+}.
\end{proof}

In a similar way, we can have the following similar property.
\begin{Proposition}\label{prop-3}
For the scalar function $S^{(i,j)}$ defined by \eqref{Sij} with $\bM, \bK, \br,\bs$ satisfying the Sylvester equation \eqref{SE-2},
we have the following relation,
\begin{equation}
\S{i,j-2k}=\S{i-2k,j}+\sum^{2k-1}_{l=0}(-1)^l\S{i,-2k+l}\S{-1-l,j},~~~(k=1,2,\cdots).
\label{Sij-k-}
\end{equation}
When $k=1$ one has
\begin{equation}
S^{(i,j-2)}=S^{(i-2,j)}+S^{(i,-2)}S^{(-1,j)}-S^{(i,-1)}S^{(-2,j)}.
\label{Sij-k=1-}
\end{equation}
\end{Proposition}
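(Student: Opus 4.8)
The plan is to mirror the proof of Proposition \ref{prop-2}, but with the nonnegative powers of $\bK$ replaced by negative ones. This is legitimate because the solvability assumption on \eqref{SE-2} forces $|\bK|\neq 0$, so that $\bK^{-1}$ and all powers $\bK^{-s}$ are well defined; apart from this, no new structural input beyond what was used for Proposition \ref{prop-2} is needed.

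First I would record the negative-power analogue of \eqref{MK-rec}. Conjugating \eqref{MK-rec} (with $s=2k$) by $\bK^{-2k}$, that is multiplying it by $\bK^{-2k}$ on both the left and the right, and relabelling the summation index, one obtains
\begin{equation*}
\bK^{-2k}\bM-\bM\bK^{-2k}=\sum^{2k-1}_{l=0}(-1)^l\,\bK^{-2k+l}\,\br\,\bs^{\st}\,\bK^{-1-l},\qquad(k=1,2,\cdots),
\end{equation*}
where the signs close up neatly precisely because $2k$ is even. Equivalently, one may multiply \eqref{SE-2} by $\bK^{-1}$ on the left and on the right to get $\bK^{-1}\bM+\bM\bK^{-1}=\bK^{-1}\br\,\bs^{\st}\bK^{-1}$ and then iterate exactly as in the proof of Proposition \ref{prop-2}.

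Next, using the same auxiliary vector $\bu^{(i)}=(\bI+\bM)^{-1}\bK^i\br$ of \eqref{ui}, I would start from $\bK^{-2k}\bu^{(i)}+\bK^{-2k}\bM\bu^{(i)}=\bK^{i-2k}\br$, apply the displayed identity to commute $\bM$ past $\bK^{-2k}$, and arrive at
\begin{equation*}
(\bI+\bM)\,\bK^{-2k}\bu^{(i)}=\bK^{i-2k}\br-\sum^{2k-1}_{l=0}(-1)^l\,\bK^{-2k+l}\,\br\,\bs^{\st}\bK^{-1-l}\bu^{(i)}.
\end{equation*}
Left-multiplying by $\bs^{\st}\bK^j(\bI+\bM)^{-1}$ and reading off the scalars via \eqref{Sij} (in particular $\bs^{\st}\bK^{-1-l}\bu^{(i)}=\S{i,-1-l}$ and $\bs^{\st}\bK^j(\bI+\bM)^{-1}\bK^{-2k+l}\br=\S{-2k+l,j}$) gives $\S{i,j-2k}=\S{i-2k,j}-\sum_{l=0}^{2k-1}(-1)^l\S{-2k+l,j}\S{i,-1-l}$. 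Finally, the substitution $l\mapsto 2k-1-l$ turns this last sum into $\sum_{l=0}^{2k-1}(-1)^l\S{i,-2k+l}\S{-1-l,j}$, which is exactly \eqref{Sij-k-}; specialising to $k=1$ then yields \eqref{Sij-k=1-}.

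There is no deep obstacle here: the argument is a faithful reflection of the proof of Proposition \ref{prop-2}, and the only points demanding care are (a) verifying that the $\bK^{-1}$ substitution is permitted, which the standing invertibility of $\bK$ guarantees, and (b) the two index relabellings, which must be carried out carefully so that the right-hand side that emerges naturally is brought into the symmetric form stated in \eqref{Sij-k-}. As an alternative I would remark that $(\bK^{-1},\,\bK^{-1}\br,\,\bK^{-\st}\bs,\,\bM)$ again satisfies a rank-one Sylvester equation of the form \eqref{SE-2}, and that its associated master function equals $\S{-i-1,-j-1}$, so that \eqref{Sij-k-} is nothing but Proposition \ref{prop-2} applied to this reflected data; this, however, I would keep as a side comment rather than the main line of proof.
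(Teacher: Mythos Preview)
Your proof is correct and follows exactly the route the paper intends: it merely says ``In a similar way'' after Proposition~\ref{prop-2}, and your argument---conjugating \eqref{MK-rec} by $\bK^{-2k}$ to obtain the negative-power commutation identity, then running the auxiliary-vector computation with $\bu^{(i)}$ as before---is precisely that similar way, with the index relabellings and sign checks carried out cleanly. The closing remark about the reflected data $(\bK^{-1},\bK^{-1}\br,\bK^{-\st}\bs,\bM)$ is a nice structural observation not made explicit in the paper.
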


\subsubsection{Invariance of $\S{i,j}$ }\label{sec:2.2.2}

Let us suppose $\bK_1$ is the matrix that is similar to $\bK$ under the transform matrix $\bT$, i.e.,
\begin{subequations}
\label{trans-sim}
\begin{equation}
\bK_1=\bT \bK \bT^{-1}.
\label{K1-K}
\end{equation}
We also denote
\begin{equation}
\bM_1=\bT \bM \bT^{-1},~~\br_1=\bT \br,~~\bs_1^{\st}=\bs^{\st} \bT^{-1}.
\label{Mrs-1}
\end{equation}
\end{subequations}
It is easy to verify that
\begin{subequations}
\begin{equation}
\bM_1 \bK_1 + \bK_1\bM_1= \br_1 \,\bs^{\st}_1.
\label{SE-co-inv}
\end{equation}
and
\begin{equation}
S^{(i,j)} = \bs^{\st} \bK^j(\bI+ \bM)^{-1} \bK^i \br
= \bs_1^{\st} \bK_1^j(\bI+ \bM_1)^{-1}\bK_1^i \br_1.
\end{equation}
\end{subequations}
Therefore we can say that $\S{i,j}$ is invariant under the similar transformation \eqref{trans-sim}.

\subsubsection{Symmetry property of $\S{i,j}$}

\begin{Proposition}\label{prop-4}
Suppose that $\bK, \bM, \br, \bs$ satisfy the Sylvester equation \eqref{SE-2} and $|\bK| \neq 0$.
Then the scalar elements $\S{i,j}$ defined by \eqref{Sij} satisfy the symmetry property
\begin{equation}
\S{i,j}=\S{j,i},
\label{Sij=Sji}
\end{equation}
i.e., the infinite matrix $\bS$ is symmetric.
\end{Proposition}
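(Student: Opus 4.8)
The plan is to use the fact that each $\S{i,j}$ is a scalar and therefore equals its own transpose:
\[
\S{i,j}=\bigl(\bs^{\st}\bK^j(\bI+\bM)^{-1}\bK^i\br\bigr)^{\st}
=\br^{\st}(\bK^{\st})^{i}(\bI+\bM^{\st})^{-1}(\bK^{\st})^{j}\bs .
\]
Transposing the Sylvester equation \eqref{SE-2} gives $\bK^{\st}\bM^{\st}+\bM^{\st}\bK^{\st}=\bs\,\br^{\st}$, which has exactly the same shape as \eqref{SE-2} with the data $(\bK,\br,\bs)$ replaced by $(\bK^{\st},\bs,\br)$, and $\mathcal{E}(\bK^{\st})=\mathcal{E}(\bK)$ keeps the solvability hypothesis of Proposition \ref{prop-1} in force. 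Hence $\S{i,j}=\S{j,i}$ amounts to saying that the master function is unchanged when one passes to the transposed data and simultaneously interchanges the two superscripts.

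To realise this invariance I would look for an invertible symmetric matrix $\boldsymbol{P}$ with $\boldsymbol{P}^{-1}\bK\boldsymbol{P}=\bK^{\st}$ and $\boldsymbol{P}\bs=\br$. Such a $\boldsymbol{P}$ automatically conjugates $\bM$ into $\bM^{\st}$: the matrix $\boldsymbol{P}^{-1}\bM\boldsymbol{P}$ satisfies $\bK^{\st}X+X\bK^{\st}=(\boldsymbol{P}^{-1}\br)(\boldsymbol{P}\bs)^{\st}=\bs\,\br^{\st}$ (using $\boldsymbol{P}^{\st}=\boldsymbol{P}$ and $\boldsymbol{P}\bs=\br$), hence equals $\bM^{\st}$ by the uniqueness in Proposition \ref{prop-1}. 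Substituting $\bK^{\st}=\boldsymbol{P}^{-1}\bK\boldsymbol{P}$ and $\bM^{\st}=\boldsymbol{P}^{-1}\bM\boldsymbol{P}$ into the displayed expression and using $\br^{\st}\boldsymbol{P}^{-1}=\bs^{\st}$ then collapses it to $\bs^{\st}\bK^{i}(\bI+\bM)^{-1}\bK^{j}\br=\S{j,i}$. Everything thus reduces to exhibiting $\boldsymbol{P}$.

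I would do this first for $\bK=\mathrm{Diag}(k_1,\cdots,k_N)$, where the solution of \eqref{SE-2} is the Cauchy-type matrix $M_{a,b}=r_as_b/(k_a+k_b)$: provided every $r_a$ and $s_a$ is nonzero, $\boldsymbol{P}=\mathrm{Diag}(r_1/s_1,\cdots,r_N/s_N)$ is symmetric, invertible, commutes with $\bK=\bK^{\st}$, sends $\bs$ to $\br$, and one checks immediately that $\boldsymbol{P}^{-1}\bM\boldsymbol{P}=\bM^{\st}$. For a general diagonalisable $\bK$ one writes $\bK=\bT^{-1}\mathrm{Diag}(k_a)\bT$ and invokes the similarity invariance of Section \ref{sec:2.2.2}: $\S{i,j}$ for $(\bK,\br,\bs)$ equals $\S{i,j}$ for $(\mathrm{Diag}(k_a),\bT\br,\bs^{\st}\bT^{-1})$, which is symmetric by the previous step. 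Finally, for an arbitrary $\bK$ with $|\bK|\neq0$ satisfying the solvability condition, the unique $\bM$ — and hence each $\S{i,j}$ — depends rationally, in particular continuously, on the entries of $\bK,\br,\bs$, and the data that are diagonalisable with all the relevant components nonzero are dense in the admissible set; since $\S{i,j}=\S{j,i}$ is a closed condition, it holds throughout.

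The step I expect to be the real obstacle is precisely the degenerate configurations in which the naive $\boldsymbol{P}$ breaks down: when some entry of $\br$ or $\bs$ vanishes (so $\mathrm{Diag}(r_a/s_a)$ is singular) and when $\bK$ is not diagonalisable. The former is harmless — a vanishing $r_a$ kills the $a$-th row of $\bM$ and the $a$-th component of every $\bK^i\br$, so that index simply drops out and $N$ decreases — and the latter is absorbed by the continuity argument above. Alternatively one may bypass both by starting from the explicit factorised solution $\bM=\bF\bG\bH$ recorded in Appendix \ref{A:2}, for which an intertwiner $\boldsymbol{P}$ can be written down block by block (a lower-triangular Toeplitz, i.e., ``special'', matrix) and the identity $\S{i,j}=\S{j,i}$ verified by direct computation on each Jordan block.
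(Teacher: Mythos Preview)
Your argument is correct. The transpose-and-intertwine idea is sound: once you have a symmetric invertible $\boldsymbol{P}$ with $\boldsymbol{P}^{-1}\bK\boldsymbol{P}=\bK^{\st}$ and $\boldsymbol{P}\bs=\br$, uniqueness in Proposition~\ref{prop-1} forces $\boldsymbol{P}^{-1}\bM\boldsymbol{P}=\bM^{\st}$ and the symmetry drops out. Your $\boldsymbol{P}=\mathrm{Diag}(r_a/s_a)$ works in the generic diagonal case, and the rational/continuity extension to arbitrary $\bK$ (with the Sylvester solvability condition) is legitimate since $\bM$, and hence each $\S{i,j}$, is indeed a rational function of the input data on the admissible set.

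The paper, however, takes a different route. It does not build an intertwiner or invoke density; instead it works directly with the explicit factorisation $\bM=\bF\bG\bH$ of Appendix~\ref{A:2} for $\bK$ already in canonical form $\Ga$. Using the relations $\br=\bF\,\bI_{\ty{G}}$, $\bs=\bH\,\bI_{\ty{G}}$ together with the commutation and symmetry identities $\bG=\bG^{T}$, $\bH=\bH^{T}$, $\bF\Ga=\Ga\bF$, $\bH\Ga=\Ga^{T}\bH$, $\bH\bF=(\bH\bF)^{T}$ from Proposition~\ref{prop-A-2}, one rewrites
\[
\S{i,j}=\bI_{\ty{G}}^{T}(\Ga^{T})^{j}\bigl((\bH\bF)^{-1}+\bG\bigr)^{-1}\Ga^{i}\,\bI_{\ty{G}},
\]
and since the middle factor is symmetric the interchange $i\leftrightarrow j$ is immediate. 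This is precisely the ``alternative'' you sketch in your last sentence, carried out uniformly over all Jordan blocks at once rather than block by block. What your main approach buys is conceptual clarity and independence from the explicit solution formulae; what the paper's approach buys is a fully constructive proof with no limiting argument, valid even at the degenerate configurations (vanishing components, non-diagonalisable $\bK$) without needing to argue they are limits of generic ones.
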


The proof will be given in Appendix \ref{A:3}.

\section{The KdV system}\label{sec-3}

\subsection{Evolution of $\bM$}\label{sec:3.1}

Now we suppose that $\br, \bs, \bM$ are functions of $(x,t)$ while $\bK$ is a non-trivial constant matrix,
and the evolution of $\br$ and $\bs$ are formulated by
\begin{subequations}
\begin{align}
& \br_x=\bK \br,~~\bs_x=\bK^T \bs, \label{evo-rs-x} \\
& \br_t=4\bK^3 \br,~~\bs_t=4(\bK^T)^3\bs. \label{evo-rs-t}
\end{align}
\label{evo-rs}
\end{subequations}
Taking the derivative of the Sylvester equation \eqref{SE-2} w.r.t. $x$ and making using of \eqref{evo-rs-x} we have
\[
\bK \bM_x+ \bM_x\bK=\br_x \,\bs^{\st}+ \br \bs^{\st}_x=\bK \br \bs^{\st}+ \br \bs^{\st} \bK,
\]
i.e.
\[\bK (\bM_x-\br \bs^{\st})+ (\bM_x-\br \bs^{\st} )\bK=0. \]
This immediately yields the relation
\begin{equation}
\bM_x=\br \bs^{\st},
\label{evo-Mx}
\end{equation}
by virtue of proposition \ref{prop-1}.
Obviously, the above relation is also written as
\begin{equation}
\bM_x=\bK \bM+ \bM\bK
\label{Mx-1}
\end{equation}
if using the Sylvester equation \eqref{SE-2}.

In a similar way, for the time evolution of $\bM$, we have
\[
\bK \bM_t+\bM_t \bK =\br_t \,\bs^{\st}+ \br \bs^{\st}_t=4(\bK^3 \br \bs^{\st}+ \br \bs^{\st} \bK^3).
\]
Then, replacing $\br \bs^{\st}$ with $\bK \bM+ \bM\bK$ and making using of  proposition \ref{prop-1}
we find
\begin{equation}
\bM_t=4(\bK^3 \bM+ \bM\bK^3).
\label{Mt3}
\end{equation}

Actually, it can be proved that if
\begin{equation}
\br_{t_j}=\gamma \bK^j \br,~~\bs_{t_j}=\gamma (\bK^T)^j\bs , ~~ j\in \mathbb{Z}
\label{evo-rs-tj}
\end{equation}
with constant $\gamma$, then the evolution of $\bM$ is formulated by
\begin{equation}
\bM_{t_j}=\gamma(\bK^j \bM+ \bM\bK^j).
\label{Mtj}
\end{equation}

Go back to the relation \eqref{Mt3}. Corresponding to the expression \eqref{evo-Mx}, \eqref{Mt3}
can be alternatively expressed as
\begin{equation}
\bM_t=4(\bK^2\br\bs^{\st}-\bK\br\bs^{\st}\bK+\br\bs^{\st}\bK^2).
\label{evo-Mt3}
\end{equation}

With the evolution formulas \eqref{evo-rs}, \eqref{evo-Mx} and \eqref{evo-Mt3},
we can derive the evolution of $\S{i,j}$.

\subsection{Evolution of $\S{i,j}$}

Let us recall the auxiliary vector $\bu^{(i)}$ defined in \eqref{ui}, i.e.
\begin{equation}
\bu^{(i)}=(\bI+\bM)^{-1}\bK^i\br
\label{ui-2}
\end{equation}
and the connection with $\S{i,j}$:
\begin{equation}
S^{(i,j)}=\bs^{\st}\bK^j(\bI+\bM)^{-1}\bK^i\br=\bs^{\st} \bK^j\bu^{(i)}. \label{Sij-ui}
\end{equation}
Taking $x$-derivative on \eqref{ui-2}  we have
\[
\bM_x\bu^{(i)}+(\bI+\bM)\bu^{(i)}_x=\bK^{i}\br_x=\bK^{i+1}\br.
\]
and further,
\[
(\bI+\bM)\bu^{(i)}_x=\bK^{i+1}\br-\br \bs^{\st} \bu^{(i)},
\]
where we have substituted \eqref{evo-Mx}.
Then, left-multiplied the above relation by $(\bI+\bM)^{-1}$ and also using the relation \eqref{Sij-ui}
yield
\begin{equation}
\bu^{(i)}_x=\bu^{(i+1)}-S^{(i,0)}\bu^{(0)},
\label{evo-uix}
\end{equation}
which is viewed as evolution of $\bu^{(i)}$ in $x$-direction.
In a similar way we derive the evolution of $\bu^{(i)}$ in $t$-direction:
\begin{eqnarray}
\bu^{(i)}_{t}=4(\bu^{(i+3)}-S^{(i,0)}\bu^{(2)}-S^{(i,2)}\bu^{(0)}+S^{(i,1)}\bu^{(1)}). \label{evo-uit}
\end{eqnarray}
Now, noting that the connection \eqref{Sij-ui} between $\bu^{(i)}$ and $\S{i,j}$,
we  left-multiply $\bs^{\st}\bK^j$ on relations \eqref{evo-uix} and \eqref{evo-uit}, respectively.
After some calculation we obtain the following evolution for $S^{(i,j)}$:
\begin{subequations}
\label{evo-Sij}
\begin{align}
& S^{(i,j)}_{x}=S^{(i+1,j)}+S^{(i,j+1)}-S^{(i,0)}S^{(0,j)}, \label{evo-Sijx} \\
& S^{(i,j)}_t=4(S^{(i+3,j)}+S^{(i,j+3)}+S^{(i,1)}S^{(1,j)}-S^{(i,0)}S^{(2,j)}-S^{(i,2)}S^{(0,j)}). \label{evo-Sijt}
\end{align}
One can repeatedly use \eqref{evo-Sijx} and get higher-order derivatives\footnote{These can be easily derived by means of
computer algebra, e.g. {\it Mathematica}.}
 of $S^{(i,j)}$ w.r.t $x$. Here we just
list $S^{(i,j)}_{xx}$ and $S^{(i,j)}_{xxx}$, which read
\begin{align}
S^{(i,j)}_{xx}=& S^{(i+2,j)}+S^{(i,j+2)}-2S^{(i+1,0)}S^{(0,j)}-2S^{(i,0)}S^{(0,j+1)}\nn\\
& +2S^{(i+1,j+1)}-S^{(i,0)}S^{(1,j)}-S^{(i,1)}S^{(0,j)}+2S^{(i,0)}S^{(0,0)}S^{(0,j)}, \label{evo-Sijxx}\\
S^{(i,j)}_{xxx}=& S^{(i+3,j)}+S^{(i,j+3)}+3S^{(i+2,j+1)}+3S^{(i+1,j+2)}-3S^{(i+2,0)}S^{(0,j)}\nn\\
&  -3S^{(i,0)}S^{(0,j+2)}-6S^{(i+1,0)}S^{(0,j+1)}-3S^{(i+1,0)}S^{(1,j)}-3S^{(i,1)}S^{(0,j+1)}\nn\\
&  -S^{(i,2)}S^{(0,j)}-S^{(i,0)}S^{(2,j)}-3S^{(i+1,1)}S^{(0,j)}-3S^{(i,0)}S^{(1,j+1)}\nn\\
&  +6S^{(i+1,0)}S^{(0,0)}S^{(0,j)}+6S^{(i,0)}S^{(0,0)}S^{(0,j+1)}-2S^{(i,1)}S^{(1,j)}\nn\\
&  +3S^{(i,0)}S^{(0,0)}S^{(1,j)}+6S^{(i,0)}S^{(1,0)}S^{(0,j)}+3S^{(i,1)}S^{(0,0)}S^{(0,j)}\nn\\
&  -6S^{(i,0)}{S^{(0,0)}}^2 S^{(0,j)}. \label{evo-Sijxxx}
\end{align}
\end{subequations}

\subsection{Nonlinear evolution equations}

\subsubsection{The KdV equation}

Let us define
\begin{equation}
u=\S{0,0},
\label{u}
\end{equation}
i.e. $i=j=0$ for $\S{i,j}$.
In this case, with symmetric property $\S{i,j}=\S{j,i}$ the evolution relation \eqref{evo-Sij} reduces to
\begin{subequations}
\begin{align}
& u_{x}=2S^{(1,0)}-u^{2}, \label{ux} \\
& u_{t}=4(2S^{(3,0)}-2uS^{(2,0)}+S^{(1,0)^{2}}),\\
& u_{xxx}=2S^{(3,0)}+6S^{(2,1)}-8uS^{(2,0)}-14S^{(1,0)^{2}}-6uS^{(1,1)}+24u^2S^{(1,0)}-6u^4.
\end{align}
\end{subequations}
Then by direct substitution  we find
\begin{eqnarray}
u_{t}-u_{xxx}-6u_{x}^{2}=6(S^{(3,0)}-S^{(2,1)}-S^{(1,0)^{2}}+\S{0,0}S^{(1,1)}).
\end{eqnarray}
The right hand side is nothing but the recurrence relation  \eqref{Sij-k=1+} with $(i,j)=(1,0)$.
Thus we have
\begin{eqnarray}
u_{t}-u_{xxx}-6u_{x}^{2}=0, \label{pKdV}
\end{eqnarray}
which is the potential KdV equation.
Taking $w=2u_x$ yields the KdV equation
\begin{eqnarray}
w_{t}-w_{xxx}-6ww_{x}=0, \label{KdV}
\end{eqnarray}
of which the solution is given by
\begin{equation}
w=2(\bs^{\st}(\bI+\bM)^{-1}\br)_x.
\end{equation}

\subsubsection{The mKdV equation}

Let introduce
\begin{equation}
v=S^{(-1,0)}-1. \end{equation}
In this turn, with $(i,j)=(-1,0)$ in \eqref{evo-Sij} and using $S^{(i,j)}=S^{(j,i)}$ one finds
\begin{subequations}
\begin{align}
v_{x}=& S^{(-1,1)}-uv,\\
v_{t}=& 4(S^{(-1,3)}-vS^{(0,2)}+S^{(-1,1)}S^{(0,1)}-S^{(-1,2)}u),\\
v_{xx}=& -3vS^{(0,1)}+S^{(-1,2)}+2{u}^{2}v-uS^{(-1,1)}, \label{vxx} \\
v_{xxx}=& S^{(-1,3)}-4vS^{(0,2)}-3vS^{(1,1)}+15uvS^{(0,1)}\nn \\
        & -6{u}^{3}v-5S^{(0,1)}S^{(-1,1)}+3{u}^{2}S^{(-1,1)}-uS^{(-1,2)}.
\end{align}
\end{subequations}
Further we find
\begin{align*}
& v_{t}-v_{xxx}+3\frac{v_{x}v_{xx}}{v} \\
 =&
 3(S^{(-1,3)}+S^{(1,1)}v-S^{(-1,1)}S^{(0,1)})+3(\frac{\S{-1,1}}{v}-2u)(S^{(-1,2)}+vS^{(0,1)}-uS^{(-1,1)}),
\end{align*}
where the r.h.s. vanishes in the light of the recurrence relation \eqref{Sij-k=1+} with $(i,j)=(-1,0)$
and $(-1,1)$. Thus we have
\begin{eqnarray}
v_{t}-v_{xxx}+3\frac{v_{x}v_{xx}}{v}=0. \label{pmKdV-0}
\end{eqnarray}
This can be viewed as the potential mKdV equation.
In fact, by introducing
\begin{equation}
\nu= \ln v
\end{equation}
we can write \eqref{pmKdV-0} as
\begin{eqnarray}
\nu_{t}-\nu_{xxx}+2(\nu_x)^3=0, \label{pmKdV-01}
\end{eqnarray}
which is the familiar form of the potential mKdV equation.
Then, taking $\mu=\nu_x$ we arrive at the mKdV equation
\begin{eqnarray}
\mu_{t}-\mu_{xxx}+6\mu^2\mu_x=0, \label{mKdV}
\end{eqnarray}
of which the solution is given by
\begin{equation}
\mu=\partial_x \ln (\bs^{\st}\bK^{-1}(\bI+\bM)^{-1}\br-1).
\end{equation}

As for the connection with the KdV equation,
noting that, with the definition of $u$ and $v$, the recurrence relation \eqref{Sij-k=1+} with $(i,j)=(-1,0)$ reads
\begin{eqnarray}
S^{(-1,2)}=-vS^{(0,1)}+uS^{(-1,1)}, \label{S-12}
\end{eqnarray}
from \eqref{vxx} we find
\begin{eqnarray}
v_{xx}=(-4S^{(0,1)}+2u^{2})v=-2 u_x v. \label{vxx-s}
\end{eqnarray}
This implies
\begin{eqnarray}
-2u_x=\frac{v_{xx}}{v}=\left(\frac{v_x}{v}\right)_x+\left(\frac{v_x}{v}\right)^2, \label{MT-PKmK}
\end{eqnarray}
i.e.
\begin{eqnarray}
-w=\mu_x+\mu^2,
\end{eqnarray}
which is the  Miura transformation between the mKdV equation \eqref{mKdV} and the KdV equation \eqref{KdV}.

\subsubsection{The SKdV equation}

Let us examine the equation related to $\S{-1,-1}$. We introduce
\begin{eqnarray}
z=S^{(-1,-1)}-x.
\label{z}
\end{eqnarray}
Setting $i=j=-1$,  then \eqref{evo-Sij} gives
\begin{subequations}
\begin{eqnarray}
&& z_{x}=-v^{2}, \label{MT-MS} \\
&& z_{xx}=2uv^{2}-2vS^{(-1,1)},\\
&& z_{xxx}=-2vS^{(-1,2)}-2S^{(-1,1)^{2}}+6uvS^{(-1,1)}+6v^{2}S^{(0,1)}-6u^{2}v^{2},\\
&& z_{t}=4(S^{(-1,1)^2}-2vS^{(-1,2)}),
\end{eqnarray}
where we have made use of the definition of $u$ and $v$ and the symmetric property $S^{(i,j)}=S^{(j,i)}$.
\end{subequations}
By forwarding computation we find
\begin{eqnarray}
z_{t}-z_{xxx}+\frac{3}{2}\frac{z_{xx}^{2}}{z_{x}}=6v(uS^{(-1,1)}-S^{(-1,2)}-vS^{(0,1)}),
\end{eqnarray}
of which the r.h.s. obeys the recurrence relation \eqref{Sij-k=1+} with $(i,j)=(-1,0)$, i.e. \eqref{S-12}.
The SKdV equation reads
\begin{eqnarray}
z_{t}-z_{xxx}+\frac{3}{2}\frac{z_{xx}^{2}}{z_{x}}=0. \label{SKdV}
\end{eqnarray}
Relation \eqref{MT-MS} provides the following  Miura transformation between the potential mKdV equation and
SKdV equation,
\[\mu=\frac{z_{xx}}{2z_x}.\]

\vskip 6pt
Let us just give a brief summary of this section. We used the Sylvester equation \eqref{SE-2}
to define an infinite matrix $\bS$ with scalar elements $\S{i,j}$. By imposing evolution \eqref{evo-rs} on $\br, \bs$
(which are also viewed as dispersion relation), we find the potential KdV equation \eqref{pKdV},
the potential mKdV equation \eqref{pmKdV-01} and the SKdV equation \eqref{SKdV} are nothing but the recurrence relation \eqref{Sij-k=1+}
on certain points (i.e. with some choices of $(i,j)$).

\section{The sine-Gordon equation}\label{sec-4}

In this section we will consider the following dispersion relation (or evolution for $\br, \bs$):
\begin{subequations}
\begin{align}
& \br_x=\bK \br,~~ \bs_x=\bK^T \bs,\label{evo-rs-sg-x} \\
& \br_t=\frac{1}{4} \bK^{-1} \br,~~\bs_t=\frac{1}{4} (\bK^T)^{-1}\bs. \label{evo-rs-sg-t}
\end{align}
\label{evo-rs-sg}
\end{subequations}
By this we  derive the sine-Gordon equation.

Similar to the treatment in Sec.\ref{sec:3.1},
for the evolution of $\bM$, in addition to \eqref{evo-Mx}, corresponding to \eqref{evo-rs-sg-t}
we find
\begin{eqnarray}
\bM_t=\frac{1}{4} \bK^{-1}\br\bs^{\st}\bK^{-1}. \label{evo-Mt-sg}
\end{eqnarray}
For the auxiliary vector $\bu^{(i)}$ defined by \eqref{ui}, besides \eqref{evo-uix}, here we have
\begin{equation}
\bu_t^{(i)}=\frac{1}{4}(\bu^{(i-1)}-\bu^{(-1)} S^{(i,-1)}).
\label{evo-ut-sg}
\end{equation}
Then, for $\S{i,j}$ we find
\begin{subequations}\label{evo-Sij-sg}
\begin{align}
S^{(i,j)}_{x}=& S^{(i+1,j)}+S^{(i,j+1)}-S^{(i,0)}S^{(0,j)}, \label{evo-Sij-sg-x} \\
S_t^{(i,j)}=&\frac{1}{4}(S^{(i-1,j)}+S^{(i,j-1)}-S^{(i,-1)}S^{(-1,j)}),\label{evo-Sij-sg-t}\\
S_{xt}^{(i,j)}=&\frac{1}{4}[2S^{(i,j)}+S^{(i-1,j+1)}+S^{(i+1,j-1)}-S^{(i-1,0)}S^{(0,j)}-S^{(i,0)}S^{(0,j-1)}\nonumber\\
&-(S^{(i+1,-1)}+S^{(i,0)}-S^{(i,0)}S^{(0,-1)})S^{(-1,j)}\nonumber\\
&-S^{(i,-1)}(S^{(0,j)}+S^{(-1,j+1)}-S^{(-1,0)}S^{(0,j)})],\label{evo-Sij-sg-xt}
\end{align}
in which \eqref{evo-Sij-sg-x} is the same as before.
\end{subequations}

In order to derive the sine-Gordon equation, we employ the previous definition
\begin{eqnarray}
u=S^{(0,0)},~~v=S^{(-1,0)}-1,
\end{eqnarray}
while we use the dispersion relation \eqref{evo-rs-sg}. It then follows from \eqref{evo-Sij-sg} and the property
$\S{i,j}=\S{j,i}$ that
\begin{subequations}
\begin{eqnarray}
&& v_{x}=S^{(-1,1)}-uv,\\
&& v_{t}=\frac{1}{4}(-vS^{(-1,-1)}+S^{(-2,0)}),\\
&& v_{xt}=\frac{1}{4}(1+v^3+S^{(1,-2)}-S^{(1,-1)}S^{(-1,-1)}-uS^{(0,-2)}+uvS^{(-1,-1)}).
\end{eqnarray}
\end{subequations}

Then, by calculation we find
\begin{eqnarray}
4(v_{xt}v-v_{x}v_{t})-v^4+1
=vS^{(1,-2)}-S^{(1,-1)}S^{(-2,0)}+\S{-1,0}.
\end{eqnarray}
The r.h.s. is nothing but the recurrence relation \eqref{Sij-k=1-} sitting on the point $(i,j)=(1,0)$,
 which then vanishes.
The reminded equation reads
\begin{eqnarray}
4(v_{xt}v-v_{x}v_{t})=v^4-1.
\label{sg-v}
\end{eqnarray}
By the transformation
\begin{equation}
v=e^{\frac{i\phi}{2}}, ~~\mathrm{i.e.}~\phi =-2i\ln{v},
\end{equation}
the above equation is transformed into the sine-Gordon equation
\begin{eqnarray}
\phi_{xt}=\sin{\phi},\label{SG4}
\end{eqnarray}
of which the solution is given by
\begin{eqnarray}
\phi=-2i\ln{(S^{(-1,0)}-1)}.
\end{eqnarray}

\section{Link to the known results}\label{sec-5}

It is well known that in the frame of the famous Sato Theory \cite{MJD-book-2000,OSTT-PTPS-1988},
solutions of integrable equations are provided through the so-called  $\tau$ functions
which are expressed either in Hirota's exponential polynomial form or in terms of Wronskians.
It is also well known that by the famous IST solutions of integrable systems are given through the
GLM integral equation.
In our paper by $\S{i,j}$ we defined functions $u, v,z$ and derived the potential KdV equation,  potential mKdV equation,
SKdV equation and  sine-Gordon equation. At meantime solutions of these equations are given through the
expression of $\S{i,j}$, i.e.
\begin{equation}
S^{(i,j)}=\bs^{\st}\,\bK^j(\bI+\bM)^{-1}\bK^i\br.
\label{Sij-3}
\end{equation}
In the following let us discuss connection of the above $\S{i,j}$ with $\tau$ functions and the trace expression in the operator method.

\subsection{Connection with $\tau$ function}

\begin{Proposition}\label{prop-5}
For the scalar function $\S{i,j}$ defined in \eqref{Sij-3} where $\bK, \bM, \br, \bs$ are formulated by the
Sylvester equation \eqref{SE-2} and $\br, \bs$ obey the evolution \eqref{evo-rs-x} in $x$-direction, we have
\begin{equation}
\S{0,0}=\frac{\tau_x}{\tau},
\label{Sij-tau}
\end{equation}
where
\begin{equation}
\tau=|\bI+\bM|.
\label{tau-M}
\end{equation}
\end{Proposition}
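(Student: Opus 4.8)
The plan is to read \eqref{Sij-tau} as the statement that $S^{(0,0)}$ is the logarithmic $x$-derivative of $\tau=|\bI+\bM|$, and to prove it by combining Jacobi's formula for the derivative of a determinant with the rank-one structure of $\bM_x$ that already follows from the Sylvester equation together with the dispersion relation. First I would note that $S^{(i,j)}$ is well defined only where $\bI+\bM$ is invertible, so on the domain of interest $\tau\neq0$ and $\ln\tau$ makes sense; since $\bK$ is a constant matrix, all $x$-dependence of $\tau$ enters through $\bM$.

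The first step is Jacobi's formula. Because $\tau=\det(\bI+\bM)$ depends on $x$ smoothly through the plane-wave factors $\br,\bs$, one has
\begin{equation*}
\frac{\tau_x}{\tau}=\partial_x\ln\det(\bI+\bM)=\mathrm{tr}\!\left[(\bI+\bM)^{-1}\partial_x(\bI+\bM)\right]=\mathrm{tr}\!\left[(\bI+\bM)^{-1}\bM_x\right].
\end{equation*}
The second step invokes \eqref{evo-Mx}, which was derived precisely from the Sylvester equation \eqref{SE-2} and the evolution $\br_x=\bK\br$, $\bs_x=\bK^T\bs$ of \eqref{evo-rs-x}: namely $\bM_x=\br\,\bs^{\st}$. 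Substituting this gives $\tau_x/\tau=\mathrm{tr}[(\bI+\bM)^{-1}\br\,\bs^{\st}]$. The third step is the elementary collapse of a rank-one trace via the cyclic property:
\begin{equation*}
\mathrm{tr}\!\left[(\bI+\bM)^{-1}\br\,\bs^{\st}\right]=\mathrm{tr}\!\left[\bs^{\st}(\bI+\bM)^{-1}\br\right]=\bs^{\st}(\bI+\bM)^{-1}\br=S^{(0,0)},
\end{equation*}
where the last equality is the definition \eqref{Sij-3} with $i=j=0$. Chaining the three displays yields $S^{(0,0)}=\tau_x/\tau$, which is \eqref{Sij-tau}.

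I do not expect a genuine obstacle; the only points needing a word of care are the invertibility of $\bI+\bM$ (guaranteed wherever $S^{(i,j)}$ is defined) and the applicability of Jacobi's formula (standard, given the smooth dependence on $x$). One could instead differentiate $\tau$ directly through the multilinearity of the determinant in the rows of $\bI+\bM$ and observe that the rank-one perturbation $\bM_x=\br\,\bs^{\st}$ contributes exactly the scalar $\bs^{\st}(\bI+\bM)^{-1}\br$, but the trace argument is the cleaner route. It is worth remarking that the identical computation with \eqref{evo-rs-t} in place of \eqref{evo-rs-x} yields $\tau_t/\tau = 4\bigl(S^{(2,0)} - S^{(1,0)}S^{(0,0)} + \cdots\bigr)$-type expressions (more precisely $\tau_t/\tau = 4\,\mathrm{tr}[(\bI+\bM)^{-1}(\bK^2\br\bs^{\st}-\bK\br\bs^{\st}\bK+\br\bs^{\st}\bK^2)]$ from \eqref{evo-Mt3}), so Proposition \ref{prop-5} is the $x$-instance of a general logarithmic-derivative formula for $\tau$.
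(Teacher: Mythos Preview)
Your proof is correct and follows essentially the same route as the paper: Jacobi's formula $\tau_x/\tau=\mathrm{tr}[(\bI+\bM)^{-1}\bM_x]$, the rank-one identity $\bM_x=\br\,\bs^{\st}$ from \eqref{evo-Mx}, and the cyclic trace collapse to $\bs^{\st}(\bI+\bM)^{-1}\br=S^{(0,0)}$. The only cosmetic difference is that the paper runs the chain in the opposite direction, starting from $S^{(0,0)}$ and ending at $|\bI+\bM|_x/|\bI+\bM|$.
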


\begin{proof}
Noting that $\S{i,j}$ is a scalar function, making use of trace we write $\S{0,0}$ as
\[\S{0,0}=\bs^{\st} (\bI+\bM)^{-1} \br
=\mathrm{Tr} (\bs^{\st} (\bI+\bM)^{-1} \br)
=\mathrm{Tr} ( \br\bs^{\st} (\bI+\bM)^{-1}).
\]
According to \eqref{evo-Mx} we have
\[\br\bs^{\st}=\bM_x=(\bI+\bM)_x,\]
and consequently,
\begin{equation}
\S{0,0}=\mathrm{Tr} ( (\bI+\bM)_x (\bI+\bM)^{-1})=\frac{|\bI+\bM|_x}{|\bI+\bM|}.
\end{equation}
For the last step of the above equation we have employed a known result
\[\mathrm{Tr} ( \bA_x \bA^{-1})=\frac{|\bA|_x}{|\bA|}.\]
(See theorem 7.3 in Ref.\cite{Coddington-book}.)
Besides, $\bA^{-1}\bA_x$ is used to define logarithmic derivative of $\bA$ in operator algebra \cite{M-book-1987}.
\end{proof}

For the $\S{i,j}$ with arbitrary $i,j\in \mathbb{Z}$, one can have
\begin{equation}
S^{(i,j)}=\bs^{\st}\,\bK^j(\bI+\bM)^{-1}\bK^i\br
=-\,\frac{\left|\begin{array}{cc}
                 0 & \bs^{\st}\,\bK^j\\
                 \bK^i\br & \bI+\bM
                \end{array} \right| }{|\bI+\bM|}.
\label{Sij-4}
\end{equation}
In fact, it is not difficult to see that the numerator can be equally written as
\[\left|\begin{array}{cc}
                 -\bs^{\st}\,\bK^j(\bI+\bM)^{-1}\bK^i\br & 0 \\
                 \bK^i\br & \bI+\bM
                \end{array} \right|= - \S{i,j} |\bI+\bM|.\]
Thus, one can always write $\S{i,j}$ in the form
\begin{equation}
\S{i,j}=\frac{g}{\tau}
\label{Sij-g-tau}
\end{equation}
with some function $g$.
This copes with the rational expression of solutions of integrable systems.
In fact, both \eqref{Sij-tau} and \eqref{Sij-g-tau} are very often used in bilinearization of nonlinear equations.

\subsection{Connection with the operator method}

The operator method is based on Marchenko's work \cite{M-book-1987},
first proposed by Aden and Carl \cite{AC-1996-JMP},
and developed by Schiebold and her collaborators.
The method was reviewed in  \cite{CS-Non-1999,CS-DMV-2000}
and the connection with Cauchy-type matrices determined by the Sylvester equations
was investigated by Schiebold in \cite{S-LAA-2010}.

Actually, Marchenko's method, the operator method developed by Jena's group (Carl, Schiebold, et.al) and the
Cauchy matrix method proposed by Nijhoff, et.al, are closely related each other.
All these methods are direct and need direct verifications of solutions.
Taking the (potential) KdV equation as an example, in the operator method
the solution is defined through the  trace of the logarithmic derivative, i.e.
\begin{equation}
u=\mathrm{Tr}((\bI+\bM)^{-1}(\bI+\bM)_{x}),
\end{equation}
while in the Cauchy matrix method the solution is given by
\begin{equation}
u=\S{i,j}|_{(i,j)=(0,0)}=\S{0,0}=\bs^{\st}(\bI+\bM)^{-1}\br.
\end{equation}
We have already seen from the above subsection that
both expressions are the same.
With regard to the dispersion relations,
in the operator method the dispersion relation is defined on $\bM$,
e.g., $\bM_x=\bK\bM,~\bM_t=\bK^3 \bM$, (see \cite{AC-1996-JMP}),
while in the Cauchy matrix approach the dispersion is defined through $\br, \bs$, i.e., \eqref{evo-rs}.
Noting that $\bM$ can be factorized as $\bM=\bF\bG\bH$ where
$\bF, \bH$ and $\bG$ are respectively related to $\br, \bs$ and $\bK$, (see Appendix \ref{A:2}),
defining dispersion relations on $\br$ and $\bs$ will provide more freedom to analyze the evolution of $\S{i,j}$.
Actually, $\S{i,j}$ can evolve w.r.t. the discrete coordinates $(i,j)$ and continuous ones $x$ and $t$.
$\S{i,j}$ defined on some adjacent points may form closed forms which can be viewed as discrete equations.
These discrete equations used to play  indispensable roles in our verification.

\subsection{Connection with direct linearization approach}

In 1981, Fokas and Ablowitz provided a linearization approach which allows one to obtain a larger class of solutions for
the KdV equation \cite{FA-PRL-1981}. The approach starts from an integral equation
\begin{equation}
\phi(k;t,x)+i e^{i(kx-k^3 t)}\oint_C\frac{\phi(z;t,x)}{k+z}d\lambda(z)=e^{i(kx-k^3t)},
\end{equation}
where $d\lambda(z)$ and $C$ are an appropriate measure and contour, respectively.
If $\phi(k;t,x)$ solves the above equation, then
\begin{equation}
w=-\partial_x\oint_C\phi(k;t,x)d\lambda(k)
\end{equation}
satisfies the KdV equation
\[w_t-6ww_x-w_{xxx}=0.\]

It is hard to find a relation between the linearization approach and our approach based on the Sylvester equation.
However, such a relation is visible in the fully discrete case.
Nijhoff and his collaborators invented discrete versions of direct linearization approach\cite{NQC-PLA-1983} and
the Cauchy matrix approach\cite{NAH-2009-JPA}.
Tutorial procedures of these two approaches can be found in \cite{N-arxiv-2011} and \cite{NAH-2009-JPA}.
Making a comparison of the two procedures, one can see many correspondences.
Here the infinity matrix $\bS$ is nothing but the main matrix $\mathbf{U}$ in direct linearization approach.

\subsection{Deformation of the Sylvester equation and rank one condition}

The Sylvester equation \eqref{SE} can be alternatively expressed, for example, as the following,
\begin{equation}
\bA \bX \bB-\bX=\bC,
\end{equation}
which was used to generate Toda lattice and 2-dimensional Toda lattice, (see Table 1 in \cite{S-LAA-2010}).

Note that the Sylvester equation \eqref{SE-2} (as well as the generic one \eqref{SE}) can be solved by factorizing 
$\bM=\bF\bG\bH$, see Appendix \ref{A:2}.
Since $\br$ and $\bs$ can be expressed via $\bF$ and $\bH$ (see \eqref{rFsH}),
and making use of the property \eqref{commut-2}, one can remove $\bF$ and $\bH$ form the Sylvester equation \eqref{SE-3}
and the remaining reads
\begin{equation}
 \Ga \bG+ \bG\Ga^T=\bI_{\ty{G}}\, \bI_{\ty{G}}^T,
\label{SE-const1}
\end{equation}
whit $\bI_{\ty{G}}$  given in \eqref{IG}.
This means it is also possible to start from the following 
Sylvester equation 
\begin{equation}
 \bX \bY- \bZ\bX=\bC,
\label{SE-const2}
\end{equation}
where all the elements are constant square matrices,
and one need to find a suitable way to insert evolution of $x,t$ or dispersion relations.
The following constant Sylvester-type equation 
\begin{equation}
 \bX \bY- \bZ\bX+\bI=\bC
\label{SE-const3}
\end{equation}
is fundamentally related to bispectral differential operators \cite{W-IM-1998}
and was used to constructed $\tau$-function of the KP hierarchy \cite{KG-JMP-2001,GK-TMP-2002,GK-JGP-2006,BGK-MPAG-2009}

As we mentioned before, the matrix $\bC$ in the Sylvester equation \eqref{SE} is required to be of rank one in the operator method
so as to get needed trace property. Actually, the rank one condition appeared in many literatures 
(e.g. \cite{AC-1996-JMP,S-PD-1998,CS-Non-1999,CS-DMV-2000,S-LAA-2010,B-Non-2000,HL-PLA-2001,W-IM-1998,KG-JMP-2001,GK-TMP-2002,GK-JGP-2006}).
Note that any $N$-th order rank one matrix is similar to the product $\br \bs^{\st}$ with suitable $N$-th order vectors $\br$ and $\bs$.
Recalling our master function $\S{i,j}=\bs^{\st}\, \bK^j (\bI+\bM)^{-1}\bK^i \br$, 
the rank one condition naturally guarantees (or comes from the fact) $\S{i,j}$ defines a scaler function.
When the vectors $\br$ and $\bs$ are replaced with matrices, the rank of the  product $\br \bs^{\st}$ 
will be higher than one and this will lead to matrix equations or noncommutative equations.

\section{Conclusions and discussions}\label{sec-6}

We have shown the links between the Sylvester equation and some continuous integrable systems.
The Sylvester equation of our interest is \eqref{SE-2}, i.e.,
\begin{equation}
\bK \bM+\bM\bK=\br\, \bs^{\st},
\label{SE-6}
\end{equation}
and it defines the master function $\S{i,j}$ \eqref{Sij}.
The recurrence relation satisfied by  $\S{i,j}$  (e.g. \eqref{Sij-k=1+} and \eqref{Sij-k=1-})
can be viewed as discrete equations of $\S{i,j}$ with discrete independent variables $i,j$.
After imposing dispersion relation on $\br$ and $\bs$,
we got evolution ($x$-, $t$-derivatives) of $\S{i,j}$.
Then we were able to derive several continuous integrable equations, including the KdV equation,
mKdV equation, SKdV equation and sine-Gordon equation.
These continuous equations cope with the continuous limit of their discrete counterparts (cf. \cite{NC-AAM-1995}).
The procedure can be viewed as a continuous version of the Cauchy matrix approach in the discrete case \cite{NAH-2009-JPA,ZZ-SAM-2013}.
The verification looks more natural than those in the operator method.

We finish the paper by the following remarks.

First of all, the main purpose of the paper is to display deep links between the Sylvester equation and integrable systems,
as well as to unify the Cauchy matrix approach in both discrete and continuous case.
The solution $\bM$ of the Sylvester equation directly leads to $\tau$ function through $\tau=|\bI+\bM|$.
Since the Sylvester equation is widely used in variety of areas such as control theory, signal processing,
and so forth, we believe there will more links to be found between integrable systems and these areas.

What's more, the master function $\S{i,j}$ shares the same definition in both discrete and continuous cases.
It is interesting that all the continuous equations considered in this paper
are reduced to discrete equations \eqref{Sij-k=1+} or \eqref{Sij-k=1-} sitting on some special points.
We also note that the relation \eqref{Sij-k=1+}  first appeared in \cite{NAH-2009-JPA}.
It does not play any role in the construction of lattice equations,
but it is really indispensable in continuous case.

In addition, the Sylvester equation \eqref{SE-6} can be exactly solved by factorizing $\bM$ into $\bF\bG\bH$.
For the detailed solving procedure one can refer to \cite{ZZ-SAM-2013,S-LAA-2010} and here we list out the main results in Appendix \ref{A:2}.
We made use of some special matrices and their properties (see Appendix \ref{A:1})
to prove the symmetric property $\S{i,j}=\S{j,i}$.

Besides, it is possible to generalize the Sylvester equation \eqref{SE} so that it admits elliptic functions as solutions \cite{N-pri-2013}.
This is being considered elsewhere.
Actually, there are already examples of direct linearization approach \cite{NP-JNMP-2003,JN-2013}
and of direct verification using dressed Cauchy matrices \cite{YKN-JMP-2013}.
Another extension is to replace vectors $\br$ and $\bs$ with matrices.
This will lead to matrix equations or noncommutative equations (see \cite{NZZ-2013} an example)
together with their solutions.
This extension corresponds to noncommutative operator-valued soliton equations \cite{CS-JMP-2009,S-DCDS-2009,CS-JMP-2011,S-JMP-2011,CS-JNMP-2012}.

Finally, we note that this paper will be followed by part II \cite{ZZ-2014}, where the relations between the Sylvester equation of the generic form \eqref{SE}
and coupled continuous integrable equations, semi-discrete integrable equations and higher dimensional
integrable systems will be investigated.
We are also interested in the discrete recurrence relation of $\S{i,j}$ and the continuous or semi-discrete equations that
are constructed by $\S{i,j}$.

\vskip 20pt
\subsection*{Acknowledgments}
The authors thank Prof. Tuncay Aktosun for the warm discussion.
Our thanks are also extended to  Dr. Ying-ying Sun for verifying the formulas of the paper
and to Mr. Wei Fu who wrote a {\it Mathematica} program to much simplify the calculations. 
This project is supported by the National Natural Science Foundation
Grant (Nos.11071157, 11371241, 11301483)
and the Project of ``First-class Discipline of Universities in Shanghai''.

\vskip 20pt

\begin{appendix}

\section{Some special matrices and properties} \label{A:1}

We will use some special matrices and their properties. Let us list them below.

\begin{itemize}
\item{Diagonal matrix:
\begin{equation}
\Ga^{\tyb{N}}_{\ty{D}}(\{k_j\}^{N}_{1})=\mathrm{Diag}(k_1, k_2, \ldots, k_N),
\end{equation}
}
\item{Jordan block matrix:
\begin{equation}
\Ga^{\tyb{N}}_{\ty{J}}(a)
=\left(\begin{array}{cccccc}
a & 0    & 0   & \cdots & 0   & 0 \\
1   & a  & 0   & \cdots & 0   & 0 \\
0   & 1  & a   & \cdots & 0   & 0 \\
\vdots &\vdots &\vdots &\vdots &\vdots &\vdots \\
0   & 0    & 0   & \cdots & 1   & a
\end{array}\right),
\end{equation}
}
\item{Lower triangular Toeplitz matrix:\footnote{More properties of this kind of matrices can be found in \cite{Zhang-KdV-2006}}
\begin{equation}
\bT^{\tyb{N}}(\{a_j\}^{N}_{1})
=\left(\begin{array}{cccccc}
a_1 & 0    & 0   & \cdots & 0   & 0 \\
a_2 & a_1  & 0   & \cdots & 0   & 0 \\
a_3 & a_2  & a_1 & \cdots & 0   & 0 \\
\vdots &\vdots &\cdots &\vdots &\vdots &\vdots \\
a_{N} & a_{N-1} & a_{N-2}  & \cdots &  a_2   & a_1
\end{array}\right)_{N\times N},
\label{T}
\end{equation}
}
\item{Skew triangular Toeplitz matrix:
\begin{equation}
\bH^{\tyb{N}}(\{b_j\}^{N}_{1})
=\left(\begin{array}{ccccc}
b_1 & \cdots  & b_{N-2}  & b_{N-1} & b_N\\
b_2 & \cdots & b_{N-1}  & b_N & 0\\
b_3 &\cdots & b_N & 0 & 0\\
\vdots &\vdots & \vdots & \vdots & \vdots\\
b_N & \cdots & 0 & 0 & 0
\end{array}
\right)_{N\times N},
\label{H}
\end{equation}
}
\end{itemize}

The lower triangular Toeplitz matrices and skew triangular Toeplitz matrices defined above
have the following property.
\begin{Proposition}\label{prop-A-1}
Let
\begin{subequations}
\begin{align}
& \mathcal{T}^{\tyb{N}}=\{\bT^{\tyb{N}}(\{a_j\}^{N}_{1})\},\\
& \bar{\mathcal{T}}^{\tyb{N}}=\{\bH^{\tyb{N}}(\{b_j\}^{N}_{1})\}.
\end{align}
\end{subequations}
Then we have\\
\textrm{(1).}~ $\bA \bB=\bB \bA,~\forall \bA, \bB\in \mathcal{T}^{\tyb{N}}$;\\
\textrm{(2).}~ $\bH=\bH^T,~\forall \bH\in \bar{\mathcal{T}}^{\tyb{N}}$;\\
\textrm{(3).}~ $\bH \bA=(\bH \bA)^T=\bA^T \bH,~\forall \bA\in \mathcal{T}^{\tyb{N}},~\forall \bH\in \bar{\mathcal{T}}^{\tyb{N}}$.
\end{Proposition}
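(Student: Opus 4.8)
The plan is to handle all three parts through the standard structural descriptions of these matrix classes. \textbf{For part (1)}, I would introduce the nilpotent lower-shift matrix $E=\bT^{\tyb{N}}(\{0,1,0,\ldots,0\})$, i.e.\ the matrix carrying $1$'s on the first subdiagonal and $0$'s elsewhere. Reading off subdiagonals one sees that $E^{k}$ carries $1$'s on the $k$-th subdiagonal, that $E^{N}=0$, and that
\[
\bT^{\tyb{N}}(\{a_j\}^{N}_{1})=\sum_{j=1}^{N}a_j\,E^{\,j-1}.
\]
Hence every element of $\mathcal{T}^{\tyb{N}}$ is a polynomial in the single matrix $E$, and since $\mathbb{C}[E]$ is commutative, any two of them commute; the only thing to check is the displayed identity, which is immediate. \textbf{For part (2)}, I would simply observe that the $(i,j)$ entry of $\bH=\bH^{\tyb{N}}(\{b_j\}^{N}_{1})$ equals $b_{i+j-1}$ when $i+j-1\le N$ and $0$ otherwise; in both cases it depends on the indices only through $i+j$, hence is unchanged under $i\leftrightarrow j$, giving $\bH=\bH^{T}$.

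\textbf{For part (3)}, the key step I would establish is that the product $\bH\bA$, with $\bH\in\bar{\mathcal{T}}^{\tyb{N}}$ and $\bA=\bT^{\tyb{N}}(\{a_j\}^{N}_{1})\in\mathcal{T}^{\tyb{N}}$, is itself a skew triangular Toeplitz matrix. Adopting the convention that $a_\ell$ and $b_\ell$ vanish for $\ell\notin\{1,\ldots,N\}$, and using $H_{i,k}=b_{i+k-1}$ and $A_{k,j}=a_{k-j+1}$, the substitution $k=j+\ell-1$ gives
\[
(\bH\bA)_{i,j}=\sum_{k}H_{i,k}A_{k,j}=\sum_{\ell\ge 1}a_\ell\,b_{(i+j-1)+(\ell-1)},
\]
which depends only on $i+j$; thus $\bH\bA=\bH^{\tyb{N}}(\{c_j\}^{N}_{1})$ with $c_m=\sum_{\ell\ge 1}a_\ell b_{m+\ell-1}$, and this automatically vanishes once $i+j-1>N$. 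By part (2) this product is symmetric, so $\bH\bA=(\bH\bA)^{T}$, while $(\bH\bA)^{T}=\bA^{T}\bH^{T}=\bA^{T}\bH$, again by part (2). Chaining these equalities yields $\bH\bA=(\bH\bA)^{T}=\bA^{T}\bH$, as claimed.

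The whole argument is essentially bookkeeping, so the main (and really the only) obstacle is keeping the index ranges and the zero-padding conventions consistent — in particular checking that the truncation built into $\bH$ is respected under the substitution $k=j+\ell-1$ in part (3), so that no spurious nonzero entries appear. As an alternative route for (3) I would mention the factorization $\bH=P\,\bT$, where $P$ is the reversal matrix (with $1$'s on the anti-diagonal) and $\bT\in\mathcal{T}^{\tyb{N}}$, together with the identity $P\bT=\bT^{T}P$ and part (1); but the direct entrywise computation above is short enough that I would prefer it.
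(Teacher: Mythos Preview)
Your argument is correct in all three parts. The representation $\bT^{\tyb{N}}(\{a_j\})=\sum_{j=1}^{N}a_jE^{\,j-1}$ with $E$ the subdiagonal shift is the standard way to see that $\mathcal{T}^{\tyb{N}}=\mathbb{C}[E]$ is commutative; the symmetry of $\bH$ follows immediately from $H_{i,j}=b_{i+j-1}$ (with the convention $b_m=0$ for $m>N$); and your entrywise computation for $\bH\bA$ is accurate, including the range check ensuring that extending the $\ell$-sum beyond $N-j+1$ introduces only vanishing terms (since then $i+j+\ell-2>N$).

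As for the comparison: the paper does \emph{not} supply a proof of this proposition. It is stated in Appendix~A as an elementary structural fact about lower triangular Toeplitz and skew triangular Toeplitz matrices, with a footnote pointing to \cite{Zhang-KdV-2006} for further properties; it is then used (via its block-diagonal extension in the next proposition) in Appendix~C to verify the symmetry $\S{i,j}=\S{j,i}$. So your write-up actually fills a gap the paper leaves open. Either of your two routes for part~(3) --- the direct entrywise calculation or the factorisation $\bH=P\bT$ with the reversal matrix $P$ together with $P\bT=\bT^{T}P$ --- would be perfectly acceptable, and both are short.
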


This proposition can be extended to the following generic case.
\begin{Proposition}\label{prop-A-2}
Let
\begin{subequations}
\begin{align}
& \mathcal{G}^{\tyb{N}}=\{\mathrm{Diag}(\Ga^{\tyb{N}}_{\ty{D}}(\{a_{1,j}\}^{N_1}_{1}), \bT^{\tyb{N}}(\{a_{2,j}\}^{N_2}_{1}),
\bT^{\tyb{N}}(\{a_{3,j}\}^{N_3}_{1}), \cdots , \bT^{\tyb{N}}(\{a_{s,j}\}^{N_s}_{1}))\},\\
& \bar{\mathcal{G}}^{\tyb{N}}=\{\mathrm{Diag}(\Ga^{\tyb{N}}_{\ty{D}}(\{b_{1,j}\}^{N_1}_{1}), \bH^{\tyb{N}}(\{b_{2,j}\}^{N_2}_{1}),
\bH^{\tyb{N}}(\{b_{3,j}\}^{N_3}_{1}), \cdots , \bH^{\tyb{N}}(\{b_{s,j}\}^{N_s}_{1}))\},
\end{align}
where $0\leq N_j\leq N$ for $j=0,1,\cdots,N$ and $\sum^{s}_{j=1}N_j=N$.
\end{subequations}
Then we have\\
\textrm{(1).}~ $\bA \bB=\bB \bA,~\forall \bA, \bB\in \mathcal{G}^{\tyb{N}}$;\\
\textrm{(2).}~ $\bH=\bH^T,~\forall \bH\in \bar{\mathcal{G}}^{\tyb{N}}$;\\
\textrm{(3).}~ $\bH \bA=(\bH \bA)^T=\bA^T \bH,~\forall \bA\in \mathcal{G}^{\tyb{N}},~\forall \bH\in \bar{\mathcal{G}}^{\tyb{N}}$.
\end{Proposition}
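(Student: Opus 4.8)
The plan is to reduce Proposition~\ref{prop-A-2} to Proposition~\ref{prop-A-1} using nothing more than the block-diagonal structure. First I would fix notation and write a generic $\bA\in\mathcal{G}^{\tyb{N}}$ as $\bA=\mathrm{Diag}(\bA_1,\bA_2,\ldots,\bA_s)$, where $\bA_1=\Ga^{\tyb{N}}_{\ty{D}}(\{a_{1,j}\}^{N_1}_{1})$ is a diagonal block of order $N_1$ and $\bA_i=\bT^{\tyb{N}}(\{a_{i,j}\}^{N_i}_{1})$ is a lower triangular Toeplitz block of order $N_i$ for $i\geq 2$; similarly write a generic $\bH\in\bar{\mathcal{G}}^{\tyb{N}}$ as $\bH=\mathrm{Diag}(\bH_1,\bH_2,\ldots,\bH_s)$ with $\bH_1$ diagonal of order $N_1$ and $\bH_i=\bH^{\tyb{N}}(\{b_{i,j}\}^{N_i}_{1})$ a skew triangular Toeplitz block of order $N_i$ for $i\geq 2$. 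Since by construction every element of $\mathcal{G}^{\tyb{N}}$ and of $\bar{\mathcal{G}}^{\tyb{N}}$ carries the \emph{same} ordered partition $N=N_1+\cdots+N_s$, matrix multiplication and transposition act blockwise: for $\bB=\mathrm{Diag}(\bB_1,\ldots,\bB_s)$ of the same shape one has $\bA\bB=\mathrm{Diag}(\bA_1\bB_1,\ldots,\bA_s\bB_s)$ and $\bA^{T}=\mathrm{Diag}(\bA_1^{T},\ldots,\bA_s^{T})$. Hence it suffices to prove each of (1), (2), (3) for a single block and then take the direct sum.

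Next I would dispose of the two kinds of block separately. For the leading diagonal block the three claims are immediate: two diagonal matrices commute, a diagonal matrix equals its own transpose, and for diagonal $\bH_1,\bA_1$ one has $\bH_1\bA_1=\bA_1\bH_1=\bA_1^{T}\bH_1$, which is again diagonal and hence symmetric. For each of the remaining blocks, $\bA_i,\bB_i$ are lower triangular Toeplitz matrices of order $N_i$ and $\bH_i$ is a skew triangular Toeplitz matrix of order $N_i$, so parts (1), (2), (3) restricted to such a block are precisely parts (1), (2), (3) of Proposition~\ref{prop-A-1} for matrices of order $N_i$ — a result I am allowed to invoke. Reassembling the blocks gives $\bA\bB=\bB\bA$, $\bH=\bH^{T}$, and $\bH\bA=(\bH\bA)^{T}=\bA^{T}\bH$, which is the assertion.

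So the proof involves essentially no computation; the one point that needs care — and the nearest thing to an obstacle — is the bookkeeping that makes the blocks line up, namely that in claim (3) the diagonal block of $\bA$ is multiplied by the diagonal block of $\bH$ and each lower triangular Toeplitz block $\bA_i$ by the skew triangular Toeplitz block $\bH_i$ of the \emph{same} order $N_i$. This compatibility is built into the definitions of $\mathcal{G}^{\tyb{N}}$ and $\bar{\mathcal{G}}^{\tyb{N}}$, which share the same $s$ and the same $N_1,\ldots,N_s$; blocks with $N_i=0$ are simply absent and cause no difficulty. All the genuine content of the statement therefore lies in Proposition~\ref{prop-A-1}, and Proposition~\ref{prop-A-2} is its direct-sum extension.
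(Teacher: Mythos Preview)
Your proposal is correct and is exactly the argument the paper has in mind: the paper does not give a separate proof of Proposition~\ref{prop-A-2} but simply introduces it with the sentence ``This proposition can be extended to the following generic case,'' indicating that it follows from Proposition~\ref{prop-A-1} via the block-diagonal structure. Your reduction---noting that multiplication and transposition act blockwise on matrices sharing the same partition $N=N_1+\cdots+N_s$, then handling the diagonal block trivially and invoking Proposition~\ref{prop-A-1} on each Toeplitz block---is precisely this extension made explicit.
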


\section{Solutions to the Sylvester equation \eqref{SE-2}} \label{A:2}

Since $\S{i,j}$ is invariant under any similar transformation of $\bK$, (see Sec.\ref{sec:2.2.2})
here we give the solution of the the Sylvester equation of the following form,
\begin{equation}
 \Ga \bM+ \bM\Ga=\br \bs^{\st},
\label{SE-3}
\end{equation}
where $\Ga$ is the canonical form of $\bK$.
We note that this equation is already solved in \cite{S-LAA-2010} and \cite{ZZ-SAM-2013}.
For convenience we follow the notations used in \cite{ZZ-SAM-2013}.
Besides those matrix notations we used in the previous section,
let us list  other needed ones below.
\begin{subequations}\label{notations}
\begin{align}
& N\mathrm{\hbox{-}th~order~vector:}~~ \br=(r_1, r_2, \cdots, r_N)^T,~~  \bs=(s_1, s_2, \cdots, s_N)^T,\\
& N\mathrm{\hbox{-}th~order~vector:}~~\bI_{\ty{D}}^{\tyb{N}}=(1,1,1, \ldots, 1)^T,~~ \bI_{\ty{J}}^{\tyb{N}}=(1,0,0, \ldots, 0)^T,\\
& N\times N ~\mathrm{matrix:}~~\bG^{\tyb{N}}_{\ty{D}}(\{k_j\}^{N}_{1})
=(g_{i,j})_{N\times N},~~~g_{i,j}=\frac{1}{k_i+k_j},\\
& N_1\times N_2 ~\mathrm{matrix:}~~\bG^{\tyb{N$_1$,N$_2$}}_{\ty{DJ}}(\{k_j\}^{N_1}_{1};a)
=(g_{i,j})_{N_1\times N_2},~~~g_{i,j}=-\Bigl(\frac{-1}{k_i+a}\Bigr)^j,\\
& N_1\times N_2 ~\mathrm{matrix:}~~\bG^{\tyb{N$_1$,N$_2$}}_{\ty{JJ}}(a;b)
=(g_{i,j})_{N_1\times N_2},~~~g_{i,j}=\mathrm{C}^{i-1}_{i+j-2}\frac{(-1)^{i+j}}{(a+b)^{i+j-1}},\\
& N\times N ~\mathrm{matrix:}~~\bG^{\tyb{N}}_{\ty{J}}(a)=\bG^{\tyb{N,N}}_{\ty{JJ}}(a;a)
=(g_{i,j})_{N\times N},~~~g_{i,j}=\mathrm{C}^{i-1}_{i+j-2}\frac{(-1)^{i+j}}{(2a)^{i+j-1}},\label{G-J-a}
\end{align}
\end{subequations}
where
\[\mathrm{C}^{i}_{j}=\frac{j!}{i!(j-i)!},~~(j\geq i).\]

The procedure for solving the Sylvester equation \eqref{SE-3} can be found in \cite{ZZ-SAM-2013,S-LAA-2010}.
The key point in the solving procedure is to factorize $\bM$ in to a triplet, i.e. $\bM  =\bF \bG \bH$.
Here we skip the detailed procedure and directly write out solutions.
\begin{Proposition}\label{prop-A-3}
(1).~ When
\begin{equation}
\Ga=\Ga^{\tyb{N}}_{\ty{D}}(\{k_j\}^{N}_{1}),
\label{Ga-D}
\end{equation}
we have
\begin{subequations}
\begin{equation}
\bM  =\bF \bG \bH =\Bigl(\frac{r_i s_j}{k_i+k_j}\Bigr)_{N\times N}\, ,
\end{equation}
where
\begin{equation}
\bF=\Ga^{\tyb{N}}_{\ty{D}}(\{r_j\}^{N}_{1}),~~
\bG=\bG^{\tyb{N}}_{\ty{D}}(\{k_j\}^{N}_{1}),~~
\bH=\Ga^{\tyb{N}}_{\ty{D}}(\{s_j\}^{N}_{1}).
\end{equation}
\end{subequations}

\vskip 5pt
\noindent
(2).~When
\begin{equation}
\Ga=\Ga^{\tyb{N}}_{\ty{J}}(k_1),
\label{Ga-J}
\end{equation}
we have
\begin{subequations}
\begin{equation}
\bM=\bF  \bG  \bH,
\end{equation}
where
\begin{equation}
\bF=\bT^{\tyb{N}}(\{r_j\}^{N}_{1}),~~\bG=\bG^{\tyb{N}}_{\ty{J}}(k_1),~~ \bH=\bH^{\tyb{N}}(\{s_j\}^{N}_{1}).
\end{equation}
\end{subequations}

\vskip 5pt
\noindent
(3).~When
\begin{equation}
\Ga=\mathrm{Diag}\bigl(\Ga^{\tyb{N$_1$}}_{\ty{D}}(\{k_j\}^{N_1}_{1}),
\Ga^{\tyb{N$_2$}}_{\ty{J}}(k_{N_1+1}),\Ga^{\tyb{N$_3$}}_{\ty{J}}(k_{N_1+2}),\cdots,
\Ga^{\tyb{N$_s$}}_{\ty{J}}(k_{N_1+(s-1)})\bigr),
\label{Ga-gen}
\end{equation}
where $\sum^{s}_{j=1}N_j=N$ (and we note that in this case $\Ga\in \mathcal{G}^{\tyb{N}}$),
we have
\begin{subequations}
\label{r-M-g}
\begin{equation}
\bM=\bF\bG \bH,
\end{equation}
where
\begin{align}
&\bF=\mathrm{Diag}\bigl(
\Ga^{\tyb{N$_1$}}_{\ty{D}}(\{r_j\}^{N_1}_{1}),
\bT^{\tyb{N$_2$}}(\{r_j\}^{N_1+N_2}_{N_1+1}),\bT^{\tyb{N$_3$}}(\{r_j\}^{N_1+N_2+N_3}_{N_1+N_2+1}),\cdots,
\bT^{\tyb{N$_s$}}(\{r_j\}^{N}_{1+\sum^{s-1}_{j=1}N_j})
\bigr), \\
&\bH=\mathrm{Diag}\bigl(
\Ga^{\tyb{N$_1$}}_{\ty{D}}(\{s_j\}^{N_1}_{1}),
\bH^{\tyb{N$_2$}}(\{s_j\}^{N_1+N_2}_{N_1+1}),\bH^{\tyb{N$_3$}}(\{s_j\}^{N_1+N_2+N_3}_{N_1+N_2+1}),\cdots,
\bH^{\tyb{N$_s$}}(\{s_j\}^{N}_{1+\sum^{s-1}_{j=1}N_j})
\bigr),
\end{align}
and $\bG$ is a symmetric matrix with block structure
\begin{equation}
\bG=\bG^T=(\bG_{i,j})_{s\times s}
\end{equation}
with
\begin{equation}
\begin{array}{ll}
\bG_{1,1}=\bG^{\tyb{N$_1$}}_{\ty{D}}(\{k_j\}^{N_1}_{1}),&~\\
\bG_{1,j}=\bG_{j,1}^T=\bG^{\tyb{N$_1$,N$_j$}}_{\ty{DJ}}(\{k_j\}^{N_1}_{1};k_{N_{j-1}+1}),&~~(1<j\leq s), \\
\bG_{i,j}=\bG_{j,i}^T=\bG^{\tyb{N$_i$,N$_j$}}_{\ty{JJ}}(k_{N_{i-1}+1};k_{N_{j-1}+1}),&~~(1<i\leq j\leq s).
\end{array}
\end{equation}
\end{subequations}
\end{Proposition}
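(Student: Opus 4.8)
The plan is to verify directly that, in each case, the triplet $\bM=\bF\bG\bH$ displayed in the statement solves the Sylvester equation \eqref{SE-3}. This is legitimate: $\S{i,j}$ is invariant under similarity of $\bK$ (Sec.~\ref{sec:2.2.2}), so one may assume $\Ga$ is in one of the canonical forms \eqref{Ga-D}, \eqref{Ga-J}, \eqref{Ga-gen}, and under the standing solvability condition ($0\notin\mathcal{E}(\bK)$) Proposition~\ref{prop-1} tells us \eqref{SE-3} has a unique solution, so verification suffices. The key is to commute the ``outer'' factors $\bF,\bH$ past $\Ga$. In case~\eqref{Ga-J} one has $\bF,\Ga\in\mathcal{T}^{\tyb{N}}$ and $\bH\in\bar{\mathcal{T}}^{\tyb{N}}$ (a Jordan block is itself a lower triangular Toeplitz matrix), so Proposition~\ref{prop-A-1} gives $\Ga\bF=\bF\Ga$ and $\bH\Ga=\Ga^T\bH$; in case~\eqref{Ga-gen} the same holds with $\mathcal{T}^{\tyb{N}},\bar{\mathcal{T}}^{\tyb{N}}$ replaced by $\mathcal{G}^{\tyb{N}},\bar{\mathcal{G}}^{\tyb{N}}$ via Proposition~\ref{prop-A-2}; and case~\eqref{Ga-D} is immediate since all three factors are diagonal. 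Hence, in every case,
\[
\Ga\bM+\bM\Ga=\Ga\bF\bG\bH+\bF\bG\bH\Ga=\bF\,\bigl(\Ga\bG+\bG\Ga^T\bigr)\,\bH .
\]

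Everything is thus reduced to two facts: (i) the ``constant'' Sylvester-type identity $\Ga\bG+\bG\Ga^T=\bI_{\ty{G}}\bI_{\ty{G}}^T$ of the shape \eqref{SE-const1}, where $\bI_{\ty{G}}$ is the column vector whose component in the diagonal block is $(1,\dots,1)^T$ and whose component in each Jordan block is $(1,0,\dots,0)^T$; and (ii) $\bF\bI_{\ty{G}}=\br$ and $\bI_{\ty{G}}^T\bH=\bs^{\st}$. Granting these, $\Ga\bM+\bM\Ga=\bF\bI_{\ty{G}}\bI_{\ty{G}}^T\bH=(\bF\bI_{\ty{G}})(\bI_{\ty{G}}^T\bH)=\br\,\bs^{\st}$, and in case~\eqref{Ga-D} multiplying out the diagonal and Cauchy factors yields the explicit entrywise formula $M_{i,j}=r_i s_j/(k_i+k_j)$. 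Fact~(ii) is read off from the shapes \eqref{T}, \eqref{H}: left multiplication by $(1,0,\dots,0)^T$ extracts the first column $(r_1,r_2,\dots)^T$ of a lower triangular Toeplitz block of $\bF$ and by $(1,\dots,1)^T$ the corresponding entries of its diagonal block, while $\bI_{\ty{G}}^T\bH$ extracts the first row $(s_1,s_2,\dots)$ of each skew triangular Toeplitz block of $\bH$; reassembling over the blocks produces $\br$ and $\bs^{\st}$.

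The substance is fact~(i), and it decouples over the block partition $\bG=(\bG_{i,j})$. On the diagonal--diagonal block one has $(\Ga\bG+\bG\Ga^T)_{p,q}=(k_p+k_q)\,g_{p,q}=1$, the all-ones matrix $\bI_{\ty{D}}\bI_{\ty{D}}^T$. For the remaining blocks write $\Ga^{\tyb{N}}_{\ty{J}}(a)=a\bI+L$ with $L$ the subdiagonal shift, so that the $(p,q)$ entry of the left-hand side becomes $(a+b)\,g_{p,q}+g_{p-1,q}+g_{p,q-1}$, out-of-range indices meaning $0$. Substituting $g_{p,q}=\mathrm{C}^{p-1}_{p+q-2}\frac{(-1)^{p+q}}{(a+b)^{p+q-1}}$ on a $\ty{JJ}$-block, Pascal's identity $\mathrm{C}^{p-1}_{p+q-2}=\mathrm{C}^{p-2}_{p+q-3}+\mathrm{C}^{p-1}_{p+q-3}$ forces this expression to vanish except at $(p,q)=(1,1)$, where it is $1$; substituting $g_{p,q}=-\bigl(-1/(k_p+a)\bigr)^{q}$ on a $\ty{DJ}$-block, an elementary telescoping of the geometric terms leaves the first column equal to $1$ and all other entries $0$. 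Read block by block this is precisely $\bI_{\ty{G}}\bI_{\ty{G}}^T$, which establishes (i) and hence the proposition.

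The only place where care is needed is this last piece of combinatorial bookkeeping for the Jordan and mixed blocks: one must keep the sign $(-1)^{p+q}$ straight and handle the boundary rows and columns where $L$ (or $L^T$) produces zeros. Everything else is formal. If one prefers to avoid the binomial identity entirely, the $\ty{JJ}$- and $\ty{DJ}$-blocks can instead be obtained as confluent limits of the Cauchy matrix $\bG^{\tyb{N}}_{\ty{D}}$ by letting some of the $k_j$ coalesce; the already-verified identity $(k_p+k_q)\,g_{p,q}=1$ then passes to the limit, and since the Sylvester equation for $\bG$ has a unique solution (again by Proposition~\ref{prop-1}) this limit must coincide with the matrix written in the statement.
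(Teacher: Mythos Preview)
Your verification is correct, and in fact the paper does not carry out its own proof of this proposition: it states that ``the key point in the solving procedure is to factorize $\bM$ into a triplet, i.e.\ $\bM=\bF\bG\bH$'' and then explicitly skips the details, deferring to \cite{ZZ-SAM-2013,S-LAA-2010}. Your argument fills in exactly those details. The reduction $\Ga\bM+\bM\Ga=\bF(\Ga\bG+\bG\Ga^T)\bH$ via the commutation rules of Propositions~\ref{prop-A-1}--\ref{prop-A-2}, together with $\bF\bI_{\ty{G}}=\br$ and $\bI_{\ty{G}}^T\bH=\bs^{\st}$, is precisely what the paper records later as \eqref{commut-2}--\eqref{rFsH} and \eqref{SE-const1}, so your approach is entirely in line with what the paper intends.

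One trivial wording slip: when you say ``left multiplication by $(1,0,\dots,0)^T$ extracts the first column'' of $\bF$, you mean multiplication on the right, $\bF\,(1,0,\dots,0)^T$. The block-by-block check of $\Ga\bG+\bG\Ga^T=\bI_{\ty{G}}\bI_{\ty{G}}^T$ via Pascal's identity on the $\ty{JJ}$-blocks and the telescoping on the $\ty{DJ}$-blocks is correct as written.
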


\section{Proof of proposition \ref{prop-4}: $\S{i,j}=\S{j,i}$}\label{A:3}

The generic case of the Sylvester equation corresponds to $\Ga$ being defined in \eqref{Ga-gen}.
In this case,
\[\bM=\bF  \bG  \bH\]
where $\bF, \bH, \bG$ are given in the item (3) of proposition \ref{prop-A-3},
and $\bF, \bH$ and $\br, \bs$ are related through
\begin{equation}
\br=\bF\,\bI_{\ty{G}},~~~\bs= \bH\, {\bI_{\ty{G}}},
\label{rFsH}
\end{equation}
where
\begin{equation}
\bI_{\ty{G}}=\left(
\bI_{\ty{D}}^{\tyb{N$_1$}^T},
\bI_{\ty{J}}^{\tyb{N$_2$}^T},
\bI_{\ty{J}}^{\tyb{N$_3$}^T},
\ldots,
\bI_{\ty{J}}^{\tyb{N$_s$}^T}
\right)^T.
\label{IG}
\end{equation}
Noting that from proposition \ref{prop-A-3} we know that
\begin{equation}
 \bG=\bG^T,~~\bH=\bH^T,~~
 \bF\Ga=\Ga\bF,~~ \bH\Ga=\Ga^T\bH,~~ \bH \bF=(\bH\bF)^T,
\label{commut-2}
\end{equation}
we have
\begin{equation*}
\begin{array}{rl}
 S^{(i,j)} & =  \bs^{\st}\, \Ga^j(\bI+ \bM)^{-1}\Ga^i \br\\
 & =\bI^{T}_{\ty{G}}\, \bH \Ga^j(\bI+ \bF\bG\bH)^{-1}\Ga^i\bF\,\bI^{}_{\ty{G}}\\
 & =\bI^{T}_{\ty{G}}\, (\Ga^T)^j \bH (\bI+ \bF\bG\bH)^{-1}\bF\Ga^i\, \bI^{}_{\ty{G}}\\
 & = \bI^{T}_{\ty{G}}\, (\Ga^T)^j((\bH\bF)^{-1} + \bG)^{-1}\Ga^i\, \bI^{}_{\ty{G}}\\
 & = \bI^{T}_{\ty{G}}\, (\Ga^T)^i((\bH\bF)^{-1} + \bG)^{-1}\Ga^j\, \bI^{}_{\ty{G}}\\
 & =S^{(j,i)},
\end{array}
\end{equation*}
where we have made use of the fact $(\bH\bF)^{-1} + \bG =((\bH\bF)^{-1} + \bG)^T$.
This proves proposition \ref{prop-4}.

\section{Solutions for the obtained nonlinear equations}\label{A:4}

We already have solution $\bM$ for the Sylvester equation \eqref{SE-3} where $\bM$ is factorized as
$$\bM=\bF\bG\bH,$$
$\bF$ is related to $\br$,
$\bH$ is related to $\bs$
and $\bG$ is related to $\Ga$.
Once we find $\br$ and $\bs$ that satisfy suitable dispersion relations, we can have the explicit form for
\begin{equation}
\S{i,j}=\bs^{\st}\, \Ga^j (\bI+\bM)^{-1}\Ga^i \br,
\label{Sij-Ga}
\end{equation}
and then solutions of the corresponding nonlinear evolution equations.

Let us first consider the following $x$-evolution
\begin{equation}
\br_{x}=\Ga \br,~~ \bs_{x}= \Ga^T \bs
\label{evo-rs-x-Ga}
\end{equation}
and leave the $t$-evolution open.
We suppose
\begin{subequations}
\label{rho-sigma}
\begin{align}
&\rho_i=e^{\xi_i},~~\xi_i=k_{i}x+f(k_{i},t)+\xi^{(0)}_i,~\mathrm{with~ constants~}\xi^{(0)}_i,\\
&\sg_i=e^{\eta_i},~~\eta_i=k_{i}x+f(k_{i},t)+\eta^{(0)}_i,~\mathrm{with~ constants~}\eta^{(0)}_i,
\end{align}
where for this moment $f(k_{i},t)$ is some function of $(k_{i},t)$ and independent of $x$.
\end{subequations}
Then we have the following.\\
(1).~ When $\Ga$ is given by \eqref{Ga-D}, i.e.
\begin{equation}
\Ga=\Ga^{\tyb{N}}_{\ty{D}}(\{k_j\}^{N}_{1}),
\end{equation}
we have
\begin{subequations}
\begin{align}
&\br=\br_{\hbox{\tiny{\it D}}}^{\hbox{\tiny{[{\it N}]}}}(\{k_j\}_{1}^{N})=(r_1, r_2, \cdots, r_N)^T, ~~\mathrm{with}~ r_i=\rho_i,\\
&\bs=\bs_{\hbox{\tiny{\it D}}}^{\hbox{\tiny{[{\it N}]}}}(\{k_j\}_{1}^{N})=(s_1, s_2, \cdots, s_N)^T, ~~\mathrm{with}~ s_i=\sigma_i.
\end{align}
\end{subequations}
(2).~When $\Ga$ is given by \eqref{Ga-J}, i.e.
\begin{equation}
\Ga=\Ga^{\tyb{N}}_{\ty{J}}(k_1),
\end{equation}
we have
\begin{subequations}
\begin{align}
&\br=\br_{\hbox{\tiny{\it J}}}^{\hbox{\tiny{[{\it N}]}}}(k_1)=(r_1, r_2, \cdots, r_N)^T, ~~\mathrm{with}~ r_i=\frac{\partial^{i-1}_{k_1}\rho_1}{(i-1)!},\\
&\bs=\bs_{\hbox{\tiny{\it J}}}^{\hbox{\tiny{[{\it N}]}}}(k_1)=(s_1, s_2, \cdots, s_N)^T, ~~\mathrm{with}~ s_i=\frac{\partial^{N-i}_{k_1}\sigma_1}{(N-i)!}.
\end{align}
\end{subequations}
(3).~When $\Ga$ is given by \eqref{Ga-gen}, i.e.
\begin{equation}
\Ga=\mathrm{Diag}\bigl(\Ga^{\tyb{N$_1$}}_{\ty{D}}(\{k_j\}^{N_1}_{1}),
\Ga^{\tyb{N$_2$}}_{\ty{J}}(k_{N_1+1}),\Ga^{\tyb{N$_3$}}_{\ty{J}}(k_{N_1+2}),\cdots,
\Ga^{\tyb{N$_s$}}_{\ty{J}}(k_{N_1+(s-1)})\bigr),
\end{equation}
we have
\begin{equation}
\br=\left(
\begin{array}{l}
\br_{\ty{D}}^{\tyb{N$_1$}}(\{k_j\}_{1}^{N_1})\\
\br_{\ty{J}}^{\tyb{N$_2$}}(k_{N_1+1})\\
\br_{\ty{J}}^{\tyb{N$_3$}}(k_{N_1+2})\\
\vdots\\
\br_{\ty{J}}^{\tyb{N$_s$}}(k_{N_1+(s-1)})
\end{array}
\right),~~~\bs=\left(
\begin{array}{l}
\bs_{\ty{D}}^{\tyb{N$_1$}}(\{k_j\}_{1}^{N_1})\\
\bs_{\ty{J}}^{\tyb{N$_2$}}(k_{N_1+1})\\
\bs_{\ty{J}}^{\tyb{N$_3$}}(k_{N_1+2})\\
\vdots\\
\bs_{\ty{J}}^{\tyb{N$_s$}}(k_{N_1+(s-1)})
\end{array}
\right)^T.
\end{equation}

Finally, for the time evolution
\begin{equation}
\br_t=4 \bK^{3} \br,~~\bs_t=4 (\bK^T)^{3}\bs,
\end{equation}
one needs to take $f(k_i,t)=4k_i^3t$ in \eqref{rho-sigma}, and for
\begin{equation}
\br_t=\frac{1}{4} \bK^{-1} \br,~~\bs_t=\frac{1}{4} (\bK^T)^{-1}\bs,
\end{equation}
one takes $f(k_i,t)=\frac{t}{4k_i}$ in \eqref{rho-sigma}.

We note that it is well known that $\Ga$ being a Jordan block leads to multiple-pole solutions in IST \cite{WO-JPSJ-1982},
singular solutions or higer order positons or negatons (for the KdV equation) \cite{M-book-1987,M-PLA-1992-1,M-PLA-1992-2},
which can also be derived from soliton solutions through a limiting procedure (cf.\cite{Zhang-KdV-2006}).

\section{Higher order KdV equations}\label{A:5}

One can obtain higher order nonlinear evolution equations. Let us take the higer order KdV equations as an example.
Consider the following evolution,
\begin{subequations}
\begin{align}
& \br_x=\bK \br,~~\bs_x= \bK^T \bs,\\
& \br_{t_{2n+1}}=4^{n}\bK^{2n+1}\br, ~~\bs_{t_{2n+1}}=4^{n}(\bK^{2n+1})^T\bs.
\end{align}
\label{evo-rs-hie}
\end{subequations}
It then follows that
\begin{subequations}
\begin{eqnarray}
&& \bM_x=\br \bs^{\st}, \label{Mx-hie} \\
&& \bM_{t_{2n+1}}=4^n\sum_{l=0}^{2n}(-1)^l\bK^{2n-l}\br\bs^{\st}\bK^l,~~(n=1,2,3,\ldots), \label{Mt-hie}
\end{eqnarray}
\end{subequations}
and for $\S{i,j}$, besides those $x$-derivatives in \eqref{evo-Sij},
\begin{equation}
S^{(i, j)}_{t_{2n+1}}=4^n\left(S^{(i, j+2n+1)}+S^{(i+2n+1, j)}-\sum_{l=0}^{2n}(-1)^l S^{(2n-l, j)}S^{(i, l)}\right),~~(n=1,2,\cdots).
\end{equation}
We still take $u=S^{(0,0)}$ and find that
\begin{eqnarray}
u_{xt_{2n+1}}-u_{xxxt_{2n-1}}-8u_{x}u_{xt_{2n-1}}-4u_{xx}u_{t_{2n-1}}=0 \label{pKdV-hie}
\end{eqnarray}
holds for $n=1,2,3,4,5$.
We check the above relation by means of Mathematica, which is not complicated.
For each $n$, the l.h.s. of \eqref{pKdV-hie} reduces to the recurrence relation \eqref{Sij-k=1+} sitting on some points $(i,j)$.
In terms of $w=2\mu_x$, \eqref{pKdV-hie} is written as
\begin{subequations}
\begin{eqnarray}
w_{t_{2n+1}}=w_{xxt_{2n-1}}+4ww_{t_{2n-1}}+2w_{x}\partial_x^{-1}w_{t_{2n-1}}= {R}\,w_{t_{2n-1}} , \label{KdV-hie}
\end{eqnarray}
where
\begin{equation}
R=\partial_{xx}+4w+2w_x\partial^{-1}_x
\end{equation}
\end{subequations}
is known as the recursive operator of the KdV hierarchy.

We note that for the matrix KdV hierarchy a recursion operator has been given and solutions were verified (cf. \cite{CS-JMP-2011}).

\end{appendix}

{\small
}

\end{document}